\newcommand*\mysize{%
  \@setfontsize\mysize{10.0}{11.0}%
}
\def\etal.{et\penalty50\ al.}
\theoremstyle{plain}
\newtheorem{theorem}{Theorem}[section]
\newtheorem{lemma}[theorem]{Lemma}
\newtheorem{proposition}[theorem]{Proposition}
\newtheorem{corollary}[theorem]{Corollary}
\newtheorem{claim}[theorem]{Claim}
\theoremstyle{definition}
\newtheorem{definition}{Definition}[section]
\newtheorem{example}[definition]{Example}
\theoremstyle{remark}
\theoremstyle{plain}
\newtheorem*{theorem*}{Theorem}
\DeclareMathOperator{\OPT}{OPT}
\DeclareMathOperator{\ALG}{ALG}
\DeclareMathOperator{\ADV}{ADV}  % adversary
\DeclareMathOperator{\util}{util}           % utilization
\newcommand{\norm}[1]{\left\lVert#1\right\rVert}
\newcommand{\LF}{\textit{LastFit}\xspace}
\newcommand{\FF}{\textit{FirstFit}\xspace}
\newcommand{\NF}{\textit{NextFit}\xspace}
\newcommand{\BF}{\textit{BestFit}\xspace}
\newcommand{\AF}{\textit{AnyFit}\xspace}
\newcommand{\RF}{\textit{RandomFit}\xspace}
\newcommand{\MTF}{\textit{MTF}\xspace}
\newcommand{\HA}{\textit{HA}\xspace}
\newcommand{\MFF}{\textit{MFF}\xspace}
\newcommand{\WF}{\textit{WorstFit}\xspace}
\newcommand{\MNF}{\textit{MNF}\xspace}
\newcommand{\Greedy}{\textit{Greedy}\xspace}
\newcommand{\RSiC}{\textit{RSiC}\xspace}
\algnewcommand{\LineComment}[1]{\State \(\triangleright\) #1}
\newcommand{\cA}{\ALG}
\newcommand{\cF}{\mathcal F}
\newcommand{\cR}{\mathcal \sigma}
\DeclareMathOperator{\Ex}{\mathbb{E}}           % expected value 
\newcommand{\vecs}{{s}}
\DeclareMathOperator{\myspan}{{span}}
\title{Renting Servers for Multi-Parameter Jobs in the Cloud}
\author{%
    \texorpdfstring{
        \begin{tabular}{c@{\hskip 2cm}c}
            Yaqiao Li & Mahtab Masoori \\
            Concordia University, CSSE & Concordia University, CSSE \\
            \href{mailto:yaqiao.li@concordia.ca}{yaqiao.li@concordia.ca} & \href{mailto:mahtab.masoori@concordia.ca}{mahtab.masoori@concordia.ca}
        \end{tabular}
        }{Author Name 1 and Author Name 2} % Use \texorpdfstring to handle special characters for PDF bookmarks
    \and
    \texorpdfstring{
        \begin{tabular}{c@{\hskip 2cm}c}
            Lata Narayanan & Denis Pankratov \\
            Concordia University, CSSE & Concordia University, CSSE \\
            \href{mailto:lata.narayanan@concordia.ca}{lata.narayanan@concordia.ca} & \href{mailto:denis.pankratov@concordia.ca}{denis.pankratov@concordia.ca}
        \end{tabular}
        }{Author Name 3 and Author Name 4}
}
\begin{abstract}
    We study the Renting Servers in the Cloud problem (\RSiC) in multiple dimensions. In this problem, a sequence of multi-parameter jobs must be scheduled on servers that can be rented on-demand. Each job has an arrival time, a finishing time, and a multi-dimensional size vector that specifies its resource demands. Each server has a multi-dimensional capacity and jobs can be scheduled on a server as long as in each dimension the sum of sizes of jobs does not exceed the capacity of the server in that dimension. The goal is to minimize the total rental time of servers needed to process the job sequence. 

    \AF algorithms do not rent new servers to accommodate a job unless they have to. We introduce a sub-family of \AF algorithms, which we call monotone \AF algorithms. We show that monotone \AF algorithms have a tight competitive ratio of $\Theta(d \mu)$, where $d$ is the dimension of the problem and $\mu$ is the ratio between the maximum and minimum duration of jobs in the input sequence. We also show that upper bounds for the \RSiC problem obey the direct-sum property with respect to dimension $d$, that is we show how to transform $1$-dimensional algorithms for \RSiC to work in the $d$-dimensional setting with competitive ratio scaling by a factor of $d$. As a corollary, we obtain an $O(d\sqrt{\log \mu})$ upper bound for $d$-dimensional clairvoyant \RSiC.
    We also establish a lower bound of $\widetilde{\Omega}(d \mu)$ for both deterministic and randomized algorithms for $d$-dimensional non-clairvoyant \RSiC, under the assumption that $\mu \le \log d - 2$.

    Lastly, we propose a natural greedy algorithm, which we call \Greedy. This is a clairvoyant algorithm that belongs to the monotone \AF family of algorithms, thus, it has competitive ratio $\Theta(d \mu)$. Our experimental results indicate that \Greedy performs better or as well as all other previously proposed algorithms, for almost all the settings of arrival rates and values of $\mu$ and $d$ that we implemented. 
\end{abstract}
\def\@copyrightspace{\relax}
\begin{document}

\maketitle
\pagestyle{plain}
\section{Introduction}

One of the most famous problems in online computation is the \emph{bin packing} problem which has received a lot of attention among researchers~\cite{coffman1984approximation, coffman2013bin}. Given a set of items with a positive size, and a bin capacity, the objective is to pack all the items in the minimum possible number of bins such that the total size of items assigned to a bin does not exceed the bin capacity. For simplicity, it is generally assumed that the item sizes lie between $0$ and $1$ and the bins all have unit capacity. This problem is online in the sense that the items come in a sequential manner and the algorithm has to place a new item in a bin without having any knowledge about the upcoming items. 
Dynamic bin packing  is a generalization of the bin packing problem~\cite{coffman1983dynamic}, in which items not only have an arrival time and a size, but also a {\em duration}. 
The objective function is the same as the classic bin packing problem; minimizing the total number of bins that is used to pack all the items. Dynamic bin packing has been extensively used to model various resource allocation problems, highlighting its adaptability in optimizing packing scenarios for dynamic settings~\cite{stolyar2013large, karwayun2018dynamic, jiang2012joint,stolyar2013infinite}.

Li et al.~\cite{li2015dynamic} introduced \emph{MinUsageTime Dynamic Bin Packing}, a new variant of Dynamic Bin Packing. The problem is also known as \emph{Renting Servers in the Cloud (\RSiC)}, and 
is primarily motivated by job allocation to servers in the cloud.  For example, users make requests for virtual machines (VMs) with specific requirements to a cloud service provider such as Microsoft Azure, which then has to assign VMs to physical servers with sufficient capacity. The power and other costs incurred by the service provider are directly proportional to the total duration that servers are kept active. Optimal assignment decisions can reduce fragmentation which can result in dramatic cost savings \cite{hadary2020protean}. 
As another application, cloud gaming companies such as GaiKai, OnLive, and StreamMyGame rent servers from public cloud companies and are charged using a pay-as-you-go model. A customer's request to play a game is assigned to one of the rented servers that has enough capacity to serve the request. The rental cost paid by the gaming company is directly proportional to the duration of time that the servers are rented. 

These situations are modeled by the \RSiC problem, which has been studied in both non-clairvoyant  and clairvoyant settings, see, e.g. ~\cite{ren2018combinatorial,RSiC2015MTF,Clairvoyant_HA}. Recently, Murhekar et al. ~\cite{murhekar2023brief} initiated the study of \RSiC for multi-parameter jobs. 
In this setting, jobs with resource requirements for multiple parameters (e.g. number of GPUs, memory, network bandwidth etc.) arrive to the system in an online manner and must be assigned by the algorithm to servers with fixed capacity along each of these dimensions.   In the non-clairvoyant setting, the arrival time of a job and its resource requirements, given as a {\em size vector}, are revealed to the algorithm when the job arrives, but its duration is only known when it departs. In the clairvoyant setting, the finishing time of a job is also known to the algorithm when it arrives. Jobs  must be assigned to servers immediately after arriving, and the algorithm's decisions are irrevocable, as the cost of moving a job mid-execution to another server is assumed to be prohibitive.  The objective of the \RSiC problem is to minimize the total cost of all rented servers, where  the cost of a server is proportional to the duration for which it is rented/utilized.

The performance of online algorithms is measured by the notion of \emph{competitive ratio}. This ratio represents the worst-case comparison, considering all inputs, between the cost achieved by the online algorithm and the cost achieved by an optimal offline solution that has complete knowledge of the entire input instance in advance.

\subsection*{Our contributions}
In this paper, we study the $d$-dimensional \RSiC problem, when job sizes are $d$-dimensional vectors and the size of a job in any dimension is normalized to lie between 0 and 1, and the server capacity in each dimension is 1. We consider both the clairvoyant and non-clairvoyant versions of the problem. Our main results in this paper are summarized below:

\begin{itemize}

    \item We introduce a subset of \AF algorithms called monotone \AF algorithms and propose a new clairvoyant algorithm within this category called \emph{\Greedy}, which assigns a new job to the server with enough remaining capacity that would incur the {\em least additional cost.} We prove that all monotone \AF algorithms including \Greedy have a competitive ratio of $3d \mu+1$, where $\mu$ is the ratio between the maximum and minimum duration of jobs in the input sequence. The proof uses a new technique compared to those used in \cite{RSiC2015MTF} and \cite{ren2018combinatorial}. 

We remark here that the proof of a $6\mu+8$ upper bound on the competitive ratio of \MTF given in \cite{RSiC2015MTF} also works for \Greedy for 1-dimensional \RSiC.  However, we are able to derive a better upper bound with our technique, which moreover generalizes to any monotone \AF algorithm and $d$-dimensional \RSiC. 
    The upper bound is  tight as a lower bound of $\Omega(d \mu)$ was shown on the competitive ratio of \AF algorithms in \cite{murhekar2023brief}.

% \yaqiao{I suggest to point out the bound $6\mu+8$ in \cite{RSiC2015MTF} for MTF also works for greedy, so there is already an $O(d\mu)$ upper bound. We should emphasize that our main contribution here is a NEW method to improve this bound to $3\mu$. Our method is different from two main methods existing in literature, the one in \cite{RSiC2015MTF}, and the one in other works(which deals with a span part and non-span part separately). However, we do use one key idea from \cite{RSiC2015MTF}, which is to use the order of servers at a specific time of our interest. It turns out that our proof works for monotone algorithms including greedy, MTF, lastFit, etc.}    

    \item 
    %We analyze the clairvoyant problem in the $d$-dimensional setting for the first time. 
 
    We demonstrate a very general direct sum property of the \RSiC problem by showing how to transform any algorithm $\ALG$ for dimension 1 to a corresponding algorithm $\cA^{\oplus d}$ for dimension $d$, with competitive ratio scaling \emph{exactly} by a factor of $d$. As a corollary, we obtain the first clairvoyant algorithm for $d$-dimensional \RSiC, with competitive ratio $\Theta(d\sqrt{\log \mu})$.
%    \yaqiao{I added exactly after scaling, and added the general direct sum property wording.}

    \item We adapt, for the first time, an online graph coloring lower bound construction in \cite{HScoloring} to prove lower bounds for $d$-dimensional \RSiC. Specifically, we prove a lower bound of $\widetilde{\Omega}(d \mu)$ for both deterministic and randomized algorithms for  $d$-dimensional non-clairvoyant \RSiC when $\mu \le \log d - 2$, and a lower bound of  $\widetilde{\Omega}(d)$ for the clairvoyant case. Connections between online vector bin packing and online graph coloring have been explored in ~\cite{azar2013tightVBP,LI202370VBP}. 
 %   \yaqiao{I edited a little bit to emphasize our contribution.}

     \item We conduct experiments in the average-case scenario, evaluating nearly all existing algorithms for  \RSiC  using randomly generated synthetic data. Our findings show that the \Greedy algorithm  outperforms other algorithms, whether clairvoyant or non-clairvoyant, in the vast majority of cases. 
\end{itemize}

\subsection*{Organization} 
 The related work is summarized in Section~\ref{sec:Previous-Work}. Formal definitions and notation are introduced in Section~\ref{sec:preliminaries}. In Section~\ref{sec:greedy}, we define a new subclass of the \AF algorithms for which we call  the monotone \AF algorithms, and prove an upper bound of the competitive ratio using a new proof technique. In Section ~\ref{sec:HA}, we describe a direct sum property of algorithms for \RSiC and obtain as a corollary an upper bound for $d$-dimensional clairvoyant \RSiC. In Section~\ref{sec:LB},  we prove lower bounds for deterministic and randomized algorithms for \RSiC in both clairvoyant and non-clairvoyant settings by adapting techniques in online graph coloring.  The experimental results are presented in Section~\ref{sec:Experiment}. Finally, we draw the conclusion and list some open problems in Section~\ref{sec:conclusion}.

\section{Previous Work}\label{sec:Previous-Work}

% \lata{Re-organized this section a bit to not repeat as much the definitions already introduced. The references need to be completed and checked. }

Bin packing is one of the best-studied problems in combinatorial optimization, and extensive research has been done on proving tight bounds on the competitive ratio of online algorithms for the problem~\cite{coffman1984approximation, coffman2013bin}. 
A well-studied class of algorithms is \AF, in which the guiding principle is to always use existing bins when possible. Well-known algorithms such as \FF, \BF, and \WF fall into this class. A rather large class of \AF algorithms has been shown to have a tight competitive ratio of 1.7 for the bin packing problem \cite{johnson1974worst}. The best known upper bound for bin packing is achieved by Advanced Harmonic  ~\cite{balogh2017new} and has a competitive ratio of $1.57829$. The current best lower bound is $1.54278$, as shown in~\cite{balogh2021new}.

In the online {\em vector bin packing} problem, item sizes are given as $d$-dimensional vectors and bins have capacity 1 in each dimension. The online algorithm assigns items to bins such that the capacity constraints are respected in every dimension. 
Garey et al.~\cite{garey1976resource} showed that  Generalized  \FF has a competitive ratio at most $d+0.7$. Azar et al.~\cite{azar2013tightVBP}, using connections to online graph coloring, proved a lower bound $\Omega(d^{1-\epsilon})$ for any algorithm. 

Coffman et al.~\cite{coffman1983dynamic} introduced \emph{dynamic bin packing}: items have size as well as duration, and the objective function is still to minimize the number of bins. 
Wong et al. ~\cite{wong20128} established the current best lower bound $2.66$ of the competitive ratio for any algorithm, they also showed that \FF has an upper bound $2.897$. The performance of various algorithms such as \FF, modified \FF, \BF and \WF  have been studied in ~\cite{wong20128,chan2008dynamic,han2010dynamic} etc.

%\yaqiao{I've also edited the following two paragraphs, mainly make it simpler and structured.} \denis{Looks good, removing the comment.}

The $1$-dimensional and $d$-dimensional \RSiC problems were introduced in  Li et al.~\cite{li2015dynamic} and Murhekar et al.~\cite{murhekar2023brief}, respectively. Except ~\cite{murhekar2023brief}, all existing literature on \RSiC studies only dimension 1. 
For  non-clairvoyant $1$-dimensional \RSiC,
Li et al.~\cite{li2015dynamic} proved a lower bound of $\mu+1$ for \AF algorithms.
Later, Kamali and Lopez-Ortiz \cite{RSiC2015MTF} showed that $\mu$ is in fact a lower bound for any deterministic algorithm. The current best upper bound for \FF is $\mu+3$, which was proved in the PhD thesis of Ren ~\cite{ren2018combinatorial}.
Masoori et al.~\cite{masoori2021renting, masoori2021renting_conf} studied the performance ratio of \FF for uniform-duration jobs and for restricted servers and showed better bounds on the competitive ratio of \FF in these settings. 
Kamali and Lopez-Ortiz~\cite{RSiC2015MTF} considered the \NF algorithm and proved that it has competitive ratio of $2\mu + 1$. They also introduced a new \AF algorithm called the {\em Move-to-Front} (\MTF), which places the next job in the {\em most recently used} bin. They proved that the competitive ratio of \MTF is at most $6\mu +7$.
The clairvoyant setting has been studied in \cite{ren2016clairvoyant,Clairvoyant_HA}. In particular, Azar et al.~\cite{Clairvoyant_HA} proposed the hybrid algorithm (\HA) , and proved a tight bound of  $\Theta(\sqrt{\log \mu})$ on its competitive ratio. 

Murhekar et al.~\cite{murhekar2023brief} initiated the study of  non-clairvoyant $d$-dimensional \RSiC. They proved that \MTF has an upper bound $(2\mu +1)d +1$, which in particular improves the previous upper bound of \MTF in \cite{RSiC2015MTF} for $d=1$ to $2 \mu + 2$.
They also generalized various upper and lower bounds of algorithms such as \FF and \NF from dimension 1 to dimension $d$. In particular, they showed a  lower bound of $d(\mu +1)$ for \AF algorithms.

Finally, ~\cite{RSiC2015MTF,ren2016clairvoyant,murhekar2023brief} also presented experiments on random inputs to compare different algorithms and show various non-trivial phenomena. In particular, Kamali and Lopez-Ortiz~\cite{RSiC2015MTF} demonstrated that \MTF in general performs the best among all known non-clairvoyant algorithms at dimension 1, which was recently further confirmed by experiments for dimension $d$ in Murhekar et al.~\cite{murhekar2023brief}.  
% \yaqiao{I added this short paragraph on experimental results. Q: there are some inconsistencies in these experiments in different papers. I think it's very useful to point them out in the experiment discussion section, for reader's (and potentially people who want to do future experiment) benefit. what do you think?} \lata{Yes, I also thought we should mention the experiments here. For the inconsistencies, I kind of agree, I will see how to mention it...}

%\yaqiao{added the follwoing paragraph:} \denis{Looks good, removing the comment.}

To the best of our knowledge, the study of clairvoyant $d$-dimensional \RSiC and randomized algorithms for \RSiC is missing in the literature. This work partially fills this gap.

\section{Notation and Preliminaries}\label{sec:preliminaries}

For $n \in \mathbb{N}$, let $[n]$ denote the set $\{1,2,\cdots, n\}.$ The $L_\infty$ norm of a vector $ v \in \mathbb{R}_{\ge 0} ^d$ is denoted by $\norm{v}_\infty$ and equals $\max_{j \in [d]} v_j$. We shall make frequent use of the following classical inequalities:

\begin{proposition}     \label{prop:norm}
For any set of vectors $v_1, v_2, \cdots, v_n \in \mathbb{R}_{\ge 0} ^d$, we have the following: 
\[
    \norm{\sum_{i=1} ^{n} v_i}_\infty \le  \sum_{i=1} ^{n} \norm{ v_i}_\infty \le  d \cdot  \norm{\sum_{i=1} ^{n} v_i}_\infty 
    \]
\end{proposition}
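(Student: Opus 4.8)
The plan is to establish the two inequalities separately, in each case by a coordinate-wise argument that crucially exploits the hypothesis $v_i \in \mathbb{R}_{\ge 0}^d$. Throughout I write $(v_i)_j$ for the $j$-th coordinate of $v_i$, so that $\norm{v_i}_\infty = \max_{j \in [d]} (v_i)_j$ and similarly $\norm{\sum_i v_i}_\infty = \max_{j \in [d]} \sum_i (v_i)_j$.

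For the left inequality, I would fix a coordinate $j^\ast \in [d]$ that attains the maximum defining $\norm{\sum_{i=1}^n v_i}_\infty$. Since every entry of a non-negative vector is bounded by its own $L_\infty$ norm, we have $(v_i)_{j^\ast} \le \norm{v_i}_\infty$ for each $i$, and summing over $i$ gives
\[
\norm{\sum_{i=1}^n v_i}_\infty = \sum_{i=1}^n (v_i)_{j^\ast} \le \sum_{i=1}^n \norm{v_i}_\infty.
\]
This is just the triangle inequality for $\norm{\cdot}_\infty$ made explicit, and non-negativity lets us drop absolute values cleanly.

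For the right inequality, the key observation is that for a non-negative vector the $L_\infty$ norm is dominated by the $L_1$ norm, i.e. $\norm{v_i}_\infty = \max_{j}(v_i)_j \le \sum_{j=1}^d (v_i)_j$. Summing this over $i$ and swapping the order of summation yields
\[
\sum_{i=1}^n \norm{v_i}_\infty \le \sum_{i=1}^n \sum_{j=1}^d (v_i)_j = \sum_{j=1}^d \sum_{i=1}^n (v_i)_j = \sum_{j=1}^d \left(\sum_{i=1}^n v_i\right)_{\!j}.
\]
Each of the $d$ coordinate sums on the right is at most $\norm{\sum_i v_i}_\infty$ by definition of the max, so the whole expression is bounded by $d \cdot \norm{\sum_{i=1}^n v_i}_\infty$, completing the proof.

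There is no real obstacle here, as the statement is elementary; the only point requiring care is that both bounds genuinely depend on the non-negativity assumption, and that the factor $d$ in the right inequality arises in exactly two ways at once — from comparing $L_\infty$ to $L_1$ per vector, and from there being $d$ coordinates to sum over — which is precisely why the constant cannot in general be improved.
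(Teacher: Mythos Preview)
Your argument is correct. The paper itself states this proposition as a classical fact without supplying any proof, so there is nothing to compare against; your coordinate-wise derivation via the triangle inequality on the left and the bound $\norm{v_i}_\infty \le \sum_{j}(v_i)_j$ on the right is exactly the standard justification one would expect.
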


The input to a $d$-dimensional \RSiC problem is $\cR = \{\sigma_1, \sigma_2, \ldots, \sigma_n \}$ -- a list of jobs  where each job $\sigma_i \in \cR$ is a triple  $(a_i, f_i, \vecs_i)$, denoting the arrival time, finishing time, and size/resource demand of $\sigma_i$. 
We assume that the jobs are presented to the algorithm in the order of arrival, that is, $a_1 \le a_2 \le \cdots \le a_n$.  We refer to $f_i-a_i$ as the \emph{duration} of the job $\sigma_i$. The duration of every job lies between 1 and $\mu$, that is, 
$1 \leq f_i - a_i \leq \mu$ for every job $\sigma_i$. The \emph{utilization} of the job $\sigma_i$, denoted by $\util(\sigma_i)$, is defined as $\util(\sigma_i) = (f_i - a_i) \cdot \norm{\vecs_i}_\infty$.

%\lata{In fact, the size lies between 0 and 1 in each dimension, it's not just a non-negative real.}
Each job has multi-dimensional resource demands, i.e., $\vecs_i \in (0,1)^d$ %\mathbb{R}_{\ge 0} ^d$
where $\vecs_i^j$ denotes the size of the job $\sigma_i$ in the $j^\text{th}$ dimension for $j\in [d]$. We assume that an algorithm for \RSiC  has access to a supply of identical servers of capacity $1$ in each dimension, i.e, the size of each server is $1^{d}$. Thus, for every time $t$ the combined size of jobs in a particular dimension assigned to a particular server and active at time $t$ must not exceed $1$. We denote the sum of sizes of jobs active at time $t$ by $s(\cR, t)$, i.e., $s(\cR, t) = \sum_{i : a_i \le t < f_i} s_i.$ 
For a server $S$ alive at time $t$, we use $S(t)$ to denote the sum of sizes of all jobs that have been scheduled on $S$ and are alive at time $t$. 
As previously mentioned, the cost associated with each server corresponds to the total duration it remains open, and the overall cost of the algorithm is the sum of the costs of all servers. The following example gives a sample input and solution.

\begin{example}  
\label{ex:2d}
%\denis{Perhaps, we should give a $1$-d example instead?}
%\lata{I thought a 2-diml example would be better. The example shows for instance why the second job does not fit even though it "looks" as though it does. }
In this example we have $d = 2$. The input sequence $\sigma$ consists of four jobs, $\sigma_1 = (0, 6, {s_1} = [0.5,0.2]),
\sigma_2 = (1, 4, {s_2} = [0.2, 0.9] ),\sigma_3 = (3, 9, s_3 = [0.2, 0.3] ),   \sigma_4 = (5, 8, {s_4} = [0.6, 0.1])$ . Figure~\ref{fig:cost-example} shows a possible assignment of these jobs to servers. The algorithm opens the first server for $\sigma_1$. However, upon the arrival of $\sigma_2$, a new server is opened since the first server lacks the capacity in the second dimension to accommodate it. When the third job $\sigma_3$ arrives, it is assigned to the first server. However, for $\sigma_4$, a new server must be opened as the first server does not have enough capacity in the first dimension, and the second server is already closed.

The cost of each server is determined by its opening and closing times. The first server's cost is $9$ (opening at $0$ and finishing at $9$), while the second and third servers both have a cost of $3$. The total cost of the algorithm is the sum of the costs of all servers, resulting in a total cost of $15$.
\end{example}

% \begin{figure}[h!]   
% \begin{subfigure}{\linewidth}
%     \centering
%     \includegraphics[width=0.75\textwidth]{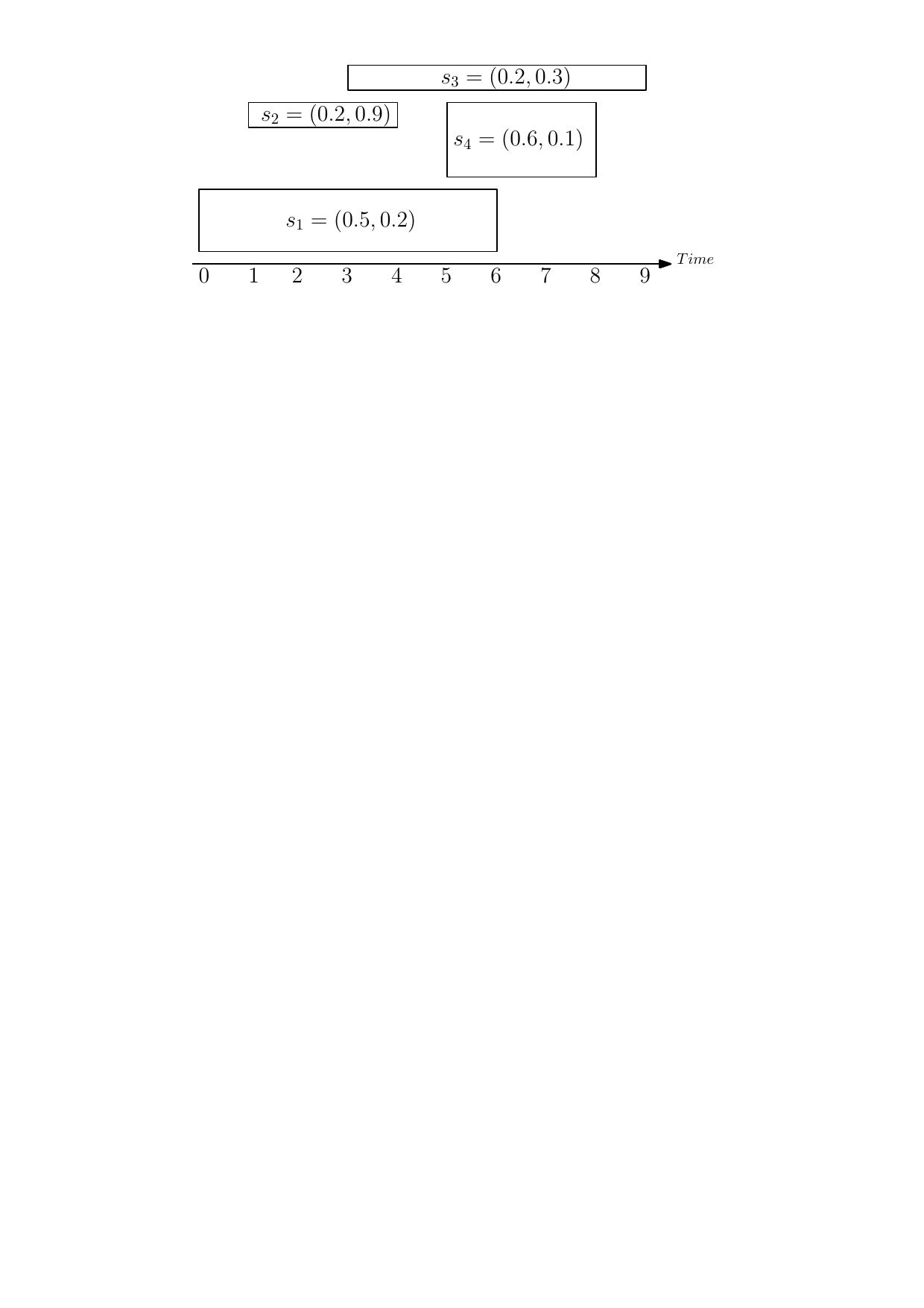}
%     \caption{}
%     \label{fig:first}
% \end{subfigure}
% \newline
% \begin{subfigure}{\linewidth}
%     \centering
%     \includegraphics[width=0.75\textwidth]{Cost of Server.pdf}
%     \caption{}
%     \label{fig:second}
% \end{subfigure} 
% \caption{Online assigning the jobs into servers. Note that in this figure, we show the jobs according to the size of the first dimension.}
% \label{fig:cost-example}
% \end{figure}

\begin{figure}[h!]
\begin{center}
\includegraphics[width=0.35\textwidth]{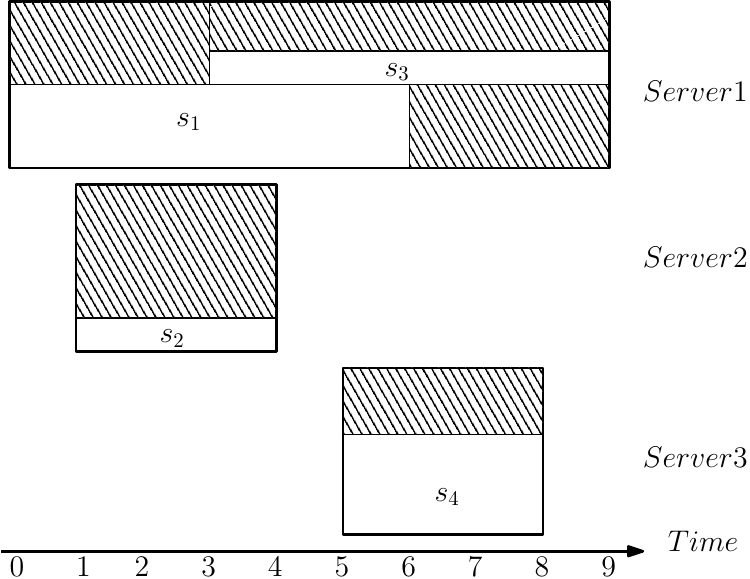}
\caption{Online assignment of jobs into servers described in Example~\ref{ex:2d}. Note that in this figure, we show the jobs according to the size of the first dimension.}
\label{fig:cost-example}
\end{center}
\end{figure}

We shall sometimes also use $r$ to denote an arbitrary job in $\sigma$. In this case, we use the notation
    $r = (a(r),f(r), \vecs(r))$
% for which 
%     $\vecs(r) \in \mathbb{R}^d_\ge 0$,
and we use
    $\vecs(r)_j$
to denote the $j^\text{th}$ coordinate of the size vector $\vecs(r)$.
The two key parameters of an instance $\cR$, usually called the \emph{span} and the \emph{utilization}~\cite{li2015dynamic}, are defined respectively below,
\[
    \myspan(\cR) = \left| \cup_{i \in [n]} [a_i, f_i)\right |, \quad 
    \util(\cR) = \sum_{i =1} ^ {n} \util(\sigma_i).
\]
Without loss of generality, we assume that $\cup_{i \in [n]} [a_i, f_i) = [0, T)$, that is, $\myspan$ arises from a single uninterrupted interval, and the first job arrives at time $0$.

For an algorithm $\ALG$ and $t \in [0,T)$ we use $\ALG(\cR, t)$ to denote the number of servers opened by $\ALG$ that are active at time $t$. We use $\ALG(\cR)$ to denote the total cost of $\ALG$ on input $\cR$, i.e., the sum of durations of servers opened by $\ALG$. Similar notation is used for $\OPT$. Observe that

\begin{proposition}\label{Prop:opt-alg-lb}
%{\ } \\
\[\OPT(\cR) = \int_{0} ^ {T} \OPT(\cR, t) dt, \text{ and} \]
\[\ALG(\cR) = \int_{0} ^ {T} \ALG(\cR, t) dt.\]
% $\OPT(\cR) = \int_{0} ^ {T} \OPT(\cR, t) dt, \text{ and }
% \ALG(\cR) = \int_{0} ^ {T} \ALG(\cR, t) dt.$
\end{proposition}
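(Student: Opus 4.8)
The plan is to prove both identities by the same argument, so I will focus on the statement for $\ALG$; the case of $\OPT$ is identical, since $\OPT$ is itself a particular feasible assignment of jobs to servers. The idea is simply to exchange a finite sum over servers with the integral over time, i.e.\ to apply Fubini's theorem (or, since the sum is finite and the integrand is a nonnegative step function, an elementary interchange).

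First I would fix notation for the servers. Let $S_1, \dots, S_m$ be the servers that $\ALG$ opens on input $\cR$, and for each server $S_k$ let $o_k$ and $c_k$ denote its opening and closing times, i.e.\ the first time a job is placed on $S_k$ and the time at which its last remaining job departs. In this model a server is rented continuously from the moment it is opened until the moment it is closed, so $S_k$ is active at time $t$ precisely when $t \in [o_k, c_k)$. Hence the contribution of $S_k$ to the total cost is its rental duration $c_k - o_k$, and by definition $\ALG(\cR) = \sum_{k=1}^m (c_k - o_k)$, while $\ALG(\cR, t) = \sum_{k=1}^m \mathbbm{1}[o_k \le t < c_k]$ counts the servers active at time $t$.

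The second step is the interchange itself. Writing the count as a sum of indicators and integrating,
\[
\int_0^T \ALG(\cR, t)\, dt = \int_0^T \sum_{k=1}^m \mathbbm{1}[o_k \le t < c_k]\, dt = \sum_{k=1}^m \int_0^T \mathbbm{1}[o_k \le t < c_k]\, dt = \sum_{k=1}^m (c_k - o_k) = \ALG(\cR),
\]
where the interchange is valid because the sum is finite, and each inner integral evaluates to the length of the interval $[o_k, c_k) \subseteq [0, T)$. Applying the identical computation to the optimal offline assignment yields $\OPT(\cR) = \int_0^T \OPT(\cR, t)\, dt$.

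The only genuine point to verify, and the step I would be most careful about, is that each server's active period is a single contiguous interval $[o_k, c_k)$, so that the time-integral of its indicator equals exactly its rental duration $c_k - o_k$. This is precisely the modeling convention that a server, once opened, is charged continuously until its last job finishes (idle gaps are still paid for), which is built into the cost definition; under this convention the displayed step is immediate. Were one instead to charge only for times at which a server carries at least one job, its active set would be a finite union of intervals, but the same computation goes through verbatim with $c_k - o_k$ replaced by the total Lebesgue measure of that union.
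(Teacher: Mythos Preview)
Your proof is correct and is exactly the standard argument. The paper does not actually give a proof of this proposition; it is introduced with ``Observe that'' and stated without justification. Your indicator-function computation is precisely the one-line verification the paper leaves implicit, so there is nothing to compare.
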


We note the following general lower bound on $\OPT(\cR)$.

\begin{lemma}\label{lemma:opt_LB}
 % $\int_{0}^T \norm {s(\cR, t)}_\infty dt \le d \cdot OPT(\cR) $ 
  $\int_0^T \lceil\norm {s(\cR, t)}_\infty \rceil dt \le \OPT(\cR) $ 
\end{lemma}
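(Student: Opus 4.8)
The plan is to prove the stronger pointwise statement that, for every fixed time $t \in [0,T)$, the number of servers any feasible solution keeps active at time $t$ is at least $\lceil \norm{s(\cR,t)}_\infty \rceil$, and in particular $\OPT(\cR, t) \ge \lceil \norm{s(\cR,t)}_\infty \rceil$. Integrating this inequality over $[0,T)$ and invoking the first identity of Proposition~\ref{Prop:opt-alg-lb}, namely $\OPT(\cR) = \int_0^T \OPT(\cR,t)\,dt$, immediately yields the claimed bound. So the whole lemma reduces to a single-time capacity argument.

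For the pointwise bound, fix $t$ and let $j^\ast \in [d]$ be a dimension attaining the maximum, so that $\norm{s(\cR,t)}_\infty = s(\cR,t)_{j^\ast} = \sum_{i\,:\,a_i \le t < f_i} \vecs_i^{j^\ast}$. Consider any feasible assignment of the jobs to servers (in particular the optimal one), and let $k$ denote the number of its servers that are active at time $t$. Every job active at time $t$ sits on exactly one of these $k$ servers, and the feasibility constraint in dimension $j^\ast$ says that the sum of the $j^\ast$-coordinates of the jobs on any single server is at most the server capacity $1$. Summing over the $k$ active servers, the total $j^\ast$-demand at time $t$ is at most $k$, i.e. $\norm{s(\cR,t)}_\infty \le k$. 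Since $k$ is a nonnegative integer, we may round up and conclude $k \ge \lceil \norm{s(\cR,t)}_\infty \rceil$.

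I do not expect a genuine obstacle here; the only point requiring care is the passage to the ceiling. One must first establish the real-valued inequality $\OPT(\cR,t) \ge \norm{s(\cR,t)}_\infty$ from the per-server capacity constraint, and only then use integrality of the server count to sharpen it to $\lceil\,\cdot\,\rceil$ \emph{inside} the integral, rather than attempting to take a ceiling of the already-integrated quantity. Once the pointwise bound is in hand, the final step is just monotonicity of the integral together with Proposition~\ref{Prop:opt-alg-lb}, which is routine.
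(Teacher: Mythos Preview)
Your proposal is correct and follows essentially the same approach as the paper: establish the pointwise bound $\OPT(\cR,t) \ge \lceil \norm{s(\cR,t)}_\infty \rceil$ from the per-dimension capacity constraint, then integrate and apply Proposition~\ref{Prop:opt-alg-lb}. Your write-up is in fact slightly more explicit than the paper's, which simply asserts the pointwise inequality directly from the fact that each server has capacity $1$ in every dimension.
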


\begin{proof}
    As the capacity of each server is $1$ for every dimension $j \in [d]$; any algorithm needs at least $\lceil\norm {s(\cR, t)}_\infty \rceil$ servers to pack the total load at any time $t$.
    %in the $j^\text{th}$ dimension.
    Therefore, $\OPT(\cR,t) \ge 
    %\lceil s(\cR, t) ^{j} \rceil =  
    \lceil\norm {s(\cR, t)}_\infty \rceil$. Using Proposition \ref{Prop:opt-alg-lb}, we can conclude: 
       \[
       \int_{0}^T \lceil\norm {s(\cR, t)}_\infty \rceil dt \le  \int_{0} ^ {T} \OPT(\cR, t) dt = \OPT(\cR). \qedhere
       \]
\end{proof}

%The following proposition gives lower bounds on the optimal cost.

\begin{corollary}[\cite{murhekar2023brief}] \label{prop:OPT-bounds}
      $\OPT(\cR) \ge \max\{\myspan(\cR), \util(\cR)/d\}$.
\end{corollary}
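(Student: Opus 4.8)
The plan is to derive the two lower bounds separately, each as a consequence of Lemma~\ref{lemma:opt_LB}, and then simply take their maximum. Lemma~\ref{lemma:opt_LB} already hands us $\OPT(\cR) \ge \int_0^T \lceil \norm{s(\cR,t)}_\infty \rceil\, dt$, so the whole argument reduces to lower-bounding this integral in two different ways.

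For the span bound $\OPT(\cR) \ge \myspan(\cR)$, I would use the normalization $\cup_{i}[a_i,f_i) = [0,T)$: at every time $t \in [0,T)$ there is at least one active job, hence $\norm{s(\cR,t)}_\infty > 0$ and so $\lceil \norm{s(\cR,t)}_\infty \rceil \ge 1$. Integrating this constant bound over $[0,T)$ gives $\OPT(\cR) \ge \int_0^T 1\, dt = T = \myspan(\cR)$. For the utilization bound $\OPT(\cR) \ge \util(\cR)/d$, I would first discard the ceiling via $\lceil x \rceil \ge x$ to get $\OPT(\cR) \ge \int_0^T \norm{s(\cR,t)}_\infty\, dt$. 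Writing $s(\cR,t) = \sum_{i:\, a_i \le t < f_i} \vecs_i$ and applying the right-hand inequality of Proposition~\ref{prop:norm} pointwise in $t$ yields $\norm{s(\cR,t)}_\infty \ge \tfrac{1}{d}\sum_{i:\, a_i \le t < f_i} \norm{\vecs_i}_\infty$. I would then swap the integral with the finite sum: each job $\sigma_i$ contributes $\norm{\vecs_i}_\infty$ exactly over $[a_i,f_i)$, an interval of length $f_i - a_i$, so $\int_0^T \sum_{i:\, a_i \le t < f_i} \norm{\vecs_i}_\infty\, dt = \sum_i (f_i - a_i)\norm{\vecs_i}_\infty = \util(\cR)$. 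Chaining the inequalities gives $\OPT(\cR) \ge \util(\cR)/d$.

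Since $\OPT(\cR)$ is bounded below by each of $\myspan(\cR)$ and $\util(\cR)/d$, it is bounded below by their maximum, which is the claim. There is no real obstacle here: the result is a routine combination of Lemma~\ref{lemma:opt_LB} with the factor-$d$ norm comparison of Proposition~\ref{prop:norm}. The only point meriting mild care is the pointwise invocation of Proposition~\ref{prop:norm} on the time-varying set of active jobs and the interchange of summation and integration, which is justified because the sum is finite and each summand is the indicator of $[a_i,f_i)$ scaled by $\norm{\vecs_i}_\infty$.
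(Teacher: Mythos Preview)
Your proposal is correct and matches the paper's intended route: the paper states the result as an immediate corollary of Lemma~\ref{lemma:opt_LB} (with a citation to \cite{murhekar2023brief}) and gives no further details, and your argument supplies exactly the expected steps---the span bound from $\lceil\norm{s(\cR,t)}_\infty\rceil \ge 1$ on $[0,T)$, and the utilization bound by dropping the ceiling, applying the right-hand inequality of Proposition~\ref{prop:norm} pointwise, and integrating.
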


An online algorithm $\ALG$ is said to be \emph{asymptotically} $\rho$-competitive if there exists a constant $c>0$ such that for all input sequences $\sigma$:
\begin{equation}\label{eq:competitiveness}
    \ALG(\sigma) \le \rho \cdot \OPT(\sigma) + c.
\end{equation} 
The infimum over all such $\rho$ is denoted by $\rho(\ALG)$ and is called the \emph{competitive ratio} of $\ALG$. If $c = 0$ then the algorithm is called \emph{strictly} $\rho$-competitive.

Lastly, we use  notation $\mathbbm{1}(C)$ for the indicator function that evaluates to $1$ if the condition $C$ is satisfied, and it evaluates to $0$ otherwise.

\section{Monotone \AF Algorithms}  \label{sec:greedy}

An algorithm for \RSiC is said to be an \AF algorithm if it opens a new server only in case that a new incoming job cannot be accommodated on any of the currently active servers. Most \AF algorithms use an ordering of active servers, and assign the next job to the first server in the ordering with enough available space to accommodate the job. In this case, we say that the algorithm \emph{employs an ordering}. For example, \FF orders servers based on their opening times, \BF  orders servers based on their remaining capacity, and \MTF moves the server to which a job is assigned to the first position in the ordering. Observe that the ordering of servers could be fixed as in \FF and \LF, or it could change when a job arrives, as in \BF, \WF, and \MTF, as well as when a job leaves, as in \BF and \WF. 

Consider an algorithm $\ALG$ that employs an ordering. We say that a server $S$ is higher in the ordering than $S'$ at time $t$ if $S$ appears closer to the beginning of the ordering than $S'$. Consider $t < t'$ and define $A(t, t')$ to be the set of servers that are alive at $t$ and $t'$.%Let $t < t'$ be two times and consider a server $S$ that is alive at $t$ and $t'$. We say that $S$ moved up (respectively, down) in the ordering between $t$ and $t'$ if the position of $S$ at $t'$ is closer to the beginning (respectively, the end) of the list than the position of $S$ at $t$. L
\begin{definition} \label{def:monotone}
    An \AF algorithm $\ALG$ is called \emph{monotone} if 
    \begin{itemize}
        \item it employs an ordering, and
        \item for every $t < t'$ and every server $S \in A(t, t')$: if $S$ did not receive any new jobs during the interval $(t, t')$ then every server in $A(t, t')$ that is higher than $S$ in the ordering at time $t$ is still higher than $S$ in the ordering at time $t'$.
    \end{itemize}
\end{definition}
Note that for a monotone \AF algorithm a server $S$ can move up in the ordering between $t$ and $t'$ only if either some server that was higher than $S$ at time $t$ was released before $t'$, or $S$ received a job during the interval $(t, t')$.  
%\lata{I added more explanation of the monotone property below, we can remove/rewrite, saying it is obvious that \FF,\LF, and \MTF are monotone and \MTF is not,  if we need space. } \denis{Looks good, removing the comment.}
It is clear that \AF algorithms employing a static ordering such as \FF and \LF have the monotone property. Observe that in \MTF, a server that receives a job moves to the first position in the ordering, and the relative position of other servers stays the same. Since the only way for a server to move ahead of other servers in the ordering is for it to receive a job, \MTF obeys the monotone property. However, in \BF, a server may move down in the ordering when a job departs, causing the server to have more available space. Thus \BF does not obey the monotone property.

We propose a new monotone \AF algorithm, that surprisingly has not been studied earlier: 

% \lata{Mahtab noted that \LF was already proposed and implemented by Muhrekar et al. 
%\noindent {\bf \LF:} Orders servers in decreasing order of their opening times.
\vspace*{0.1in}

\noindent {\bf \Greedy:} Order servers in decreasing order of their {\em finishing times}, that is, the maximum of the finishing times of jobs currently in the server. Assign the newly arrived job to the first server in the order that has sufficient capacity. If no such server exists, open a new server and assign the job to it.
\vspace*{0.1in}

\Greedy is a natural and easy-to-implement algorithm that uses the greedy heuristic of assigning the incoming job to the server that will incur the {\em least additional cost} to the algorithm. Observe that a server $S$ moves ahead of another server $S'$ in the ordering if and only if $S$ receives a job that causes its finishing time to be higher than that of $S'$.
Therefore \Greedy obeys the monotone property specified in Definition~\ref{def:monotone}. We note that it is a clairvoyant algorithm.

We need the following lemma before we prove the main result of this section. We denote the sum of $L_\infty$ norms of sizes of jobs with arrival time in the interval $(t, t')$ for $t < t'$ by $s_\infty(\cR, t, t')$, i.e., $s_\infty(\cR, t, t') = \sum_{i : t < a_i < t'} \norm{s_i}_\infty.$ 

\begin{lemma}\label{lemma:alpha-norm}
    $\int_{0}^T s_\infty(\cR, t-\alpha, t) dt = \alpha \sum_{i = 1}^n \norm{s_i}_\infty.$  
\end{lemma}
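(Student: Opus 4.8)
The plan is to read the left-hand side as a double sum/integral and exchange the order of summation and integration. Expanding the definition, the integrand is $s_\infty(\cR, t-\alpha, t) = \sum_{i=1}^n \norm{s_i}_\infty \, \mathbbm{1}(t - \alpha < a_i < t)$, a finite sum of nonnegative terms, so I would apply Tonelli's theorem to pull the sum outside the integral:
\[
\int_0^T s_\infty(\cR, t-\alpha, t)\, dt = \sum_{i=1}^n \norm{s_i}_\infty \int_0^T \mathbbm{1}(t-\alpha < a_i < t)\, dt.
\]

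The second step is to evaluate each inner integral. The condition $t - \alpha < a_i < t$ is equivalent to $a_i < t < a_i + \alpha$; that is, the set of times $t$ at which job $i$ is counted in $s_\infty$ is exactly the window $(a_i, a_i + \alpha)$, an interval of length $\alpha$. Hence $\int_0^T \mathbbm{1}(t-\alpha < a_i < t)\, dt$ is the Lebesgue measure of $(a_i, a_i + \alpha) \cap (0, T)$, which equals $\alpha$ whenever this window lies inside the integration range. Substituting $\alpha$ for each of the $n$ inner integrals then gives $\alpha \sum_{i=1}^n \norm{s_i}_\infty$, as claimed. Geometrically this is just the statement that each job $i$ contributes a strip of width $\alpha$ (in $t$) and height $\norm{s_i}_\infty$, so its total contribution to the integral is $\alpha \norm{s_i}_\infty$.

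The only step I would treat with care — and the single point where the equality could fail to be exact — is the boundary: the inner integral equals $\alpha$ rather than something smaller precisely when $(a_i, a_i+\alpha) \subseteq [0,T]$. The left endpoint is never an issue since $a_i \ge 0$, so the concern is only the right endpoint $a_i + \alpha \le T$. I would justify this using the standing assumption that the span is $[0,T)$ (all arrivals occur within the span, and no job arrives after $T$) together with the range of $\alpha$ in which the lemma is applied; equivalently one may extend the upper limit of integration to some $T'$ large enough to contain every window $(a_i, a_i+\alpha)$, since the integrand vanishes once $t$ exceeds the last relevant arrival plus $\alpha$. Apart from this boundary bookkeeping, the entire proof is the one exchange of sum and integral followed by the elementary length computation.
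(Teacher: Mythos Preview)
Your proof is correct and follows exactly the same approach as the paper: expand the definition, swap the finite sum with the integral, and evaluate each inner integral as $\alpha$. Your explicit treatment of the boundary condition $(a_i, a_i+\alpha) \subseteq [0,T]$ is in fact more careful than the paper, which simply asserts the final equality; note that in the only application (Theorem~\ref{Thm: monotone-proof}) an upper bound suffices, so the boundary subtlety is harmless either way.
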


\begin{proof}
\begin{align*}
    \int_{0}^T s_\infty(\cR, t-\alpha, t) dt &= \int_{0}^T \sum_{i = 1}^n \mathbbm{1}(t-\alpha < a_i < t) \norm{s_i}_\infty dt\\
    &= \sum_{i = 1}^n  \int_{0}^T \mathbbm{1}(t-\alpha < a_i < t) \norm{s_i}_\infty dt\\
    &= \sum_{i = 1}^n \norm{s_i}_\infty \int_{0}^T \mathbbm{1}(t-\alpha < a_i < t)  dt\\
    &= \sum_{i = 1}^n \alpha \norm{s_i}_\infty. \qedhere
\end{align*}
\end{proof}

% \begin{lemma}\label{lemma:norm_to_opt}
%     $\sum_{i = 1}^n \norm{s_i}_\infty \le d \OPT(\cR).$  
% \end{lemma}
% \begin{proof}
%     \begin{align*}
%         \sum_{i = 1}^n \norm{s_i}_\infty &\le \sum_{i = 1}^n \norm{s_i}_\infty \int_{a_i}^{f_i}1 dt\\
%         &= \sum_{i = 1}^n \norm{s_i}_\infty \int_0^T I(a_i < t < f_i) dt \\
%         &= \int_0^T \sum_{i = 1}^n \norm{s_i}_\infty I(a_i < t < f_i) dt \\
%         &= \int_0^T \sum_{i : a_i < t < f_i} \norm{s_i}_\infty dt \\
%         &\le \int_0^T d \norm{s(\cR, t)}_\infty dt \\
%         &\le d \OPT(\cR),
%     \end{align*}
%     where the second to last inequality follows from Proposition~\ref{prop:norm}, and the last inequality follows from Lemma~\ref{lemma:opt_LB}.
% \end{proof}

Now, we are ready to prove the main result of this section.

\begin{theorem}\label{Thm: monotone-proof}
    Let $\ALG$ be a monotone \AF algorithm. Then we have $\rho(\ALG) \le 3\mu d +1$.
\end{theorem}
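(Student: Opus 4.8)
The plan is to prove the stronger strict bound $\ALG(\cR) \le (3\mu d + 1)\OPT(\cR)$ by first establishing a pointwise bound on the number of live servers and then integrating it over time. Concretely, I would aim to show, for almost every $t \in [0,T)$, the inequality
$$\ALG(\cR, t) \le s_\infty(\cR, t-\mu, t) + s_\infty(\cR, t-2\mu, t) + 1.$$
Granting this, I integrate over $[0,T)$ using Proposition~\ref{Prop:opt-alg-lb} together with Lemma~\ref{lemma:alpha-norm} applied with $\alpha = \mu$ and with $\alpha = 2\mu$, which turns the two window terms into $\mu \sum_i \norm{\vecs_i}_\infty$ and $2\mu \sum_i \norm{\vecs_i}_\infty$, while the additive $1$ integrates to $T = \myspan(\cR)$. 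This gives $\ALG(\cR) \le 3\mu \sum_i \norm{\vecs_i}_\infty + \myspan(\cR)$. Since every duration is at least $1$ we have $\sum_i \norm{\vecs_i}_\infty \le \util(\cR)$, so Corollary~\ref{prop:OPT-bounds} yields $\sum_i \norm{\vecs_i}_\infty \le d\,\OPT(\cR)$ and $\myspan(\cR) \le \OPT(\cR)$, whence $\ALG(\cR) \le (3\mu d + 1)\OPT(\cR)$ and therefore $\rho(\ALG) \le 3\mu d + 1$.

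The heart of the argument is the pointwise inequality, which is where the \AF and monotonicity hypotheses enter. Fix $t$ and let $S_1, \dots, S_k$ be the servers alive at $t$, listed according to the algorithm's ordering. Because every job active at $t$ has duration at most $\mu$, it arrived within $(t-\mu, t]$; in particular the last job $r_j$ assigned to each $S_j$ arrived at some $\tau_j \in (t-\mu, t]$. For each $S_j$ that is not the topmost server at time $\tau_j$, the \AF rule guarantees that $r_j$ did not fit into the server sitting immediately above $S_j$ at time $\tau_j$, so there is a coordinate $\delta_j$ with $(\text{load of that upper server at } \tau_j)_{\delta_j} + \vecs(r_j)_{\delta_j} > 1$. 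Summing these $k-1$ strict inequalities, the left-hand side contributes $k-1$, the job-size terms $\sum_j \vecs(r_j)_{\delta_j} \le \sum_j \norm{\vecs(r_j)}_\infty$ are bounded by $s_\infty(\cR, t-\mu, t)$ (the $r_j$ are distinct jobs arriving in the window), and the upper-load terms should be bounded by $s_\infty(\cR, t-2\mu, t)$, since the jobs contributing to an upper server's load at time $\tau_j$ themselves arrived within $(\tau_j - \mu, \tau_j] \subseteq (t-2\mu, t]$.

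The crux — and the step I expect to require monotonicity in an essential way — is the bookkeeping that makes the upper-load terms sum to at most $s_\infty(\cR, t-2\mu, t)$ rather than a larger multiple: I must ensure that no job is charged more than once across the various servers playing the role of ``the server above $S_j$ at time $\tau_j$''. The difficulty is that the ordering is dynamic (servers open, close, and, by receiving jobs, overtake one another), so this upper server need not coincide with $S_{j-1}$ in the time-$t$ ordering, and a naive pairing could reuse the same upper server for several $S_j$. Here I would invoke Definition~\ref{def:monotone}: because $S_j$ receives no job after $\tau_j$, every server that is above $S_j$ at $\tau_j$ remains above it at all later times up to $t$, and the only way a server can rise above $S_j$ is to itself receive a job. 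I would use this to pin the relevant upper servers to the prefix $S_1, \dots, S_{j-1}$ of the time-$t$ ordering and so obtain an injective charging in which each server, and hence each job inside it, is counted at most once. Making this charging precise — including the single unaccounted topmost server that produces the additive $+1$, and the measure-zero boundary cases $\tau_j = t$ — is the main technical obstacle; the remaining steps are the routine integration and the two applications of Corollary~\ref{prop:OPT-bounds}.
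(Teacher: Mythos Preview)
Your outer plan --- establish the pointwise bound $\ALG(\cR,t)\le s_\infty(\cR,t-\mu,t)+s_\infty(\cR,t-2\mu,t)+1$, then integrate via Lemma~\ref{lemma:alpha-norm} and finish with Corollary~\ref{prop:OPT-bounds} --- is exactly the paper's. The divergence is in how you attempt the pointwise bound, and there the charging you sketch does not close.

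You work with the ordering at time $t$ and with the \emph{last} job $r_j$ placed in each $S_j$, and you want to pair $S_j$ with ``the server immediately above $S_j$ at time $\tau_j$''. The monotonicity you invoke (from $\tau_j$ forward to $t$) only tells you that servers above $S_j$ at $\tau_j$ which survive to $t$ lie in the prefix $\{S_1,\dots,S_{j-1}\}$; it does \emph{not} say that $S_{j-1}$ is above $S_j$ at $\tau_j$ (indeed $S_{j-1}$ may not even be open at $\tau_j$, or may be below $S_j$ then and later overtake it by receiving a job). Nor does that inclusion make the map $j\mapsto(\text{server immediately above }S_j\text{ at }\tau_j)$ injective. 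So the double-counting worry you flag is not a loose end to be tidied up; as set up, the argument does not go through.

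The paper resolves the injectivity by a different decomposition. Split the servers alive at $t$ into those already alive at $t-\mu$, listed as $A_1,\dots,A_q$ in the algorithm's ordering \emph{at time $t-\mu$}, and those opened during $(t-\mu,t)$, listed as $B_1,\dots,B_p$ by opening time. For each $A_i$ use the \emph{first} job it receives after $t-\mu$, arriving at $t_i$; since $A_i$ received nothing in $(t-\mu,t_i)$, monotonicity applied forward from $t-\mu$ to $t_i$ keeps $A_{i-1}$ above $A_i$, so that job failed to fit in $A_{i-1}$. For each $B_i$ use its very first job; since $B_i$ is freshly opened, by the \AF property alone that job failed to fit in $B_{i-1}$ (and the first $B$-job failed to fit in $A_q$). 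The ``upper servers'' are now the fixed, pairwise distinct list $A_1,\dots,A_{q-1},A_q,B_1,\dots,B_{p-1}$, so each server --- and hence each job inside it --- is charged at most once, and the two window bounds drop out. The two moves you are missing are (i) anchoring the ordering at $t-\mu$ rather than at $t$, and (ii) using the \emph{first} job after $t-\mu$ rather than the last, so that monotonicity points in the usable direction.
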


\begin{proof}
We claim that for an arbitrary $t$ it holds that
\begin{equation}\label{eq:monotone_main_ineq}
    \ALG(\cR, t) \le s_\infty(\cR, t-2\mu, t) + s_\infty(\cR, t-\mu, t) + 1.
\end{equation}
Observe that each server in $A(t-\mu, t)$ must have received a job during the time interval $(t-\mu, t)$, otherwise a server alive at time $t-\mu$ would have been released by time $t$, since the duration of each job is at most $\mu$. Suppose there are $q$ servers in $A(t-\mu,t)$ named $A_1, A_2, \ldots, A_q$, ordered according to the ordering of $\ALG$ at time $t-\mu$. Let $t_i$ be the earliest time in $(t-\mu, t)$ when a job with the size vector $s_i$ arrived in server $A_i$. Let $B(t-\mu, t)$ denote the set of new servers that were opened during time $(t-\mu, t)$ that are still alive at time $t$. Suppose there are $p$ such servers called $B_1, B_2,\ldots, B_p$ ordered by their opening times $t'_1, t'_2, \ldots, t'_p$.  Let $s'_i$ be the size vector of the first job placed into $B_i$. See Figure~\ref{fig:monotone} for an illustration.
Note that we have $\ALG(\cR, t) = p + q$.

\begin{figure}[t!]
\begin{center}
\includegraphics[width=0.5\textwidth]{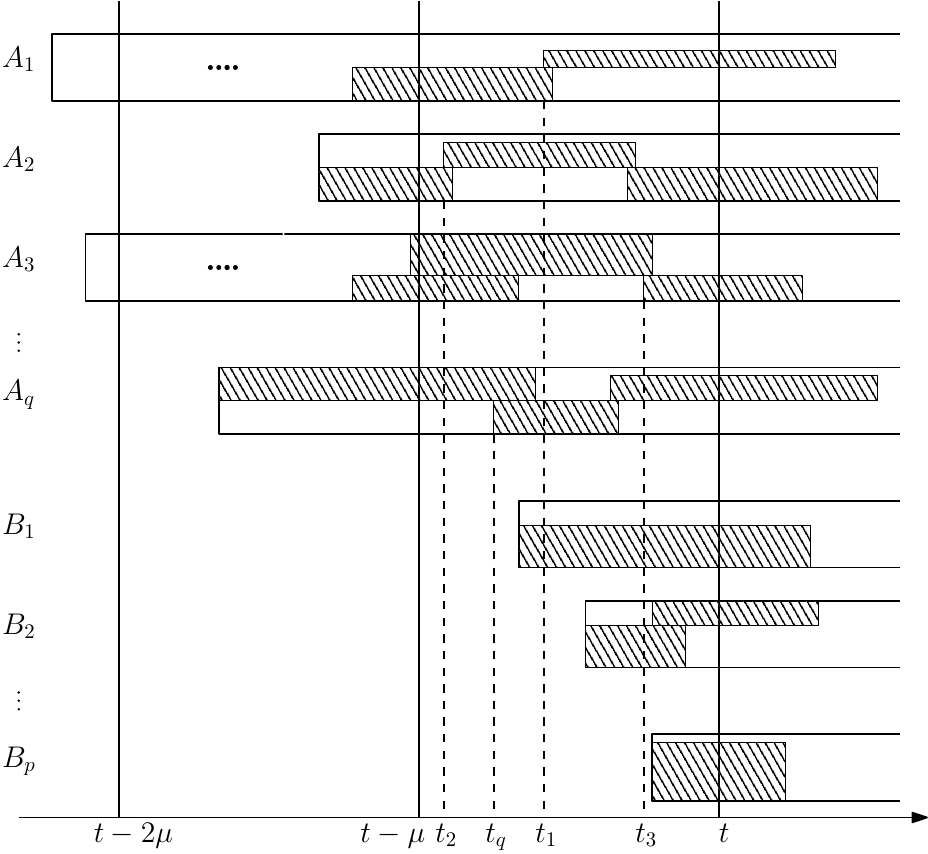}
\caption{Configuration of servers in interval $[t-2\mu$, $t$]. Note that the $A_i$ servers are ordered according to the ordering of $\ALG$ at time $t - \mu$ while the $B_i$ servers are ordered by opening time.  }
\label{fig:monotone}
\end{center}
\end{figure}

Consider some $i \in \{2, \ldots, q\}$. Observe that $A_{i-1}$ preceded $A_i$ in the ordering of $\ALG$ at time $t-\mu$, no job arrived in $A_i$ during time interval $(t-\mu, t_i)$, and $\ALG$ is monotone. Thus, $A_{i-1}$ precedes $A_i$ in the ordering of $\ALG$ immediately prior to arrival of job $s_i$. Thus, $\ALG$ must have tried placing $s_i$ into server $A_{i-1}$ at time $t_i$, but could not fit it in (since $s_i$ was ultimately placed into $A_i$). This happened because in some coordinate the total size of jobs in server $A_{i-1}$ plus the size of the job $s_i$ in that coordinate exceeded the capacity. Thus, we can conclude that
\begin{equation} \label{eq:a-servers}
    \norm{s_i + A_{i-1}(t_i)}_\infty > 1.
\end{equation}
For the $B_i$ servers,  since $\ALG$ is an \AF algorithm (it opens a new server only if it has to), we have:
\begin{equation} \label{eq:b-servers}
\norm{s'_1 + A_q(t'_1)}_\infty > 1 \mbox{ and  }\norm{s'_i + B_{i-1}(t'_i)}_\infty > 1 \mbox{ for }2 \leq i \leq p
\end{equation}

Also, note that 
\begin{equation}\label{eq:monotone_helper1}
\sum_{i = 2}^q \norm{A_{i-1}(t_i)}_\infty+ \norm{A_q(t'_1)}_\infty + \sum_{i=2}^p\norm{B_{i-1}(t'_i)}_\infty \le s_\infty(\cR, t-2\mu, t),
\end{equation}
since all jobs that are alive in server $A_{i-1}$ at time $t_i$, as well as in $A_q$ at time $t'_1$, must have arrived between $t-2\mu$ and $t$, and the $A_i$ and the $B_j$ servers partition the set of relevant jobs. In addition, we have 
\begin{equation}\label{eq:monotone_helper2}
\sum_{i = 2}^q \norm{s_i}_\infty  + \norm{s'_1}_\infty +\sum_{i=2}^p \norm{s'_i }_\infty \le s_\infty(\cR, t-\mu, t),
\end{equation}
since the $s_i$ and the $s'_i$ jobs have arrival times between $t-\mu$ and $t$, and the jobs are distinct.

Combining the observations in (\ref{eq:a-servers}) and (\ref{eq:b-servers}), we obtain 
\begin{align*}
    q+p-1 &< \sum_{i = 2}^q \norm{s_i + A_{i-1}(t_i)}_\infty + \norm{s'_1 + A_q(t'_1)}_\infty\\
    &+ \sum_{i=2}^p \norm{s'_i + B_{i-1}(t'_i)}_\infty\\
    &\le s_\infty(\cR, t-2\mu, t) + s_\infty(\cR, t-\mu, t),
    % &\le d \norm{\sum_{i = 2}^q s_i + A_{i-1}(t_i) + s'_1 + A_q(t'_1) +\sum_{i=2}^p s'_i + B_{i-1}(t'_i) }_\infty\\
    % &\le d \norm{\sum_{i = 2}^q A_{i-1}(t_i)+ A_q(t'_1) + B_{i-1}(t'_i)}_\infty \\
    % &+ d \norm{\sum_{i = 2}^q s_i  + s'_1 +\sum_{i=2}^p s'_i }\\
    % &\le d \norm{s(\cR, t-2\mu, t)}_\infty+d \norm{s(\cR, t-\mu, t)}_\infty,
\end{align*}
where the second inequality follows from Proposition~\ref{prop:norm} and application of ~\eqref{eq:monotone_helper1} and \eqref{eq:monotone_helper2}. This establishes Inequality~\eqref{eq:monotone_main_ineq}.

To finish the proof of the theorem, we integrate this  inequality over possible values of $t$, i.e.:
 \begin{align*}
     \ALG(\cR) &= \int_0^T \ALG(\cR, t) dt \\
     &\le \int_0^T ( s_\infty(\cR, t-2\mu, t) + s_\infty(\cR, t-\mu, t) + 1 )dt \\
     &= 2\mu \sum_{i = 1}^n \norm{s_i}_\infty + \mu \sum_{i = 1}^n \norm{s_i}_\infty  + \myspan(\cR)\\
     &\le 3 \mu \util(\cR) + \myspan(\cR)\\
     &\le 3\mu d \OPT(\cR) + \OPT(\cR),
 \end{align*}
 where the second equality follows from two applications of Lemma~\ref{lemma:alpha-norm}, and the last inequality is an application of Proposition~\ref{prop:OPT-bounds}.
\end{proof}

We note that there is not much room for improvement of the bound in the above theorem, because Murhekar et al.~\cite{murhekar2023brief} proved a lower bound of $(\mu +1)d$ for any \AF algorithm.

\begin{corollary}
\label{cor:greedy}
    \[(\mu+1)d \le \rho(\Greedy), \rho(\LF) \le 3d\mu+1\]
\end{corollary}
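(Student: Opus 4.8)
The plan is to observe that both inequalities follow essentially for free from results already established, once we verify that \Greedy and \LF each belong to the monotone \AF family. So the proof splits into an upper-bound argument and a lower-bound argument, and in each case the bulk of the work has already been done; what remains is to check the two algorithms satisfy the relevant hypotheses.

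For the upper bound $\rho(\ALG) \le 3d\mu + 1$, I would simply apply Theorem~\ref{Thm: monotone-proof} to each of \Greedy and \LF, which reduces the task to confirming monotonicity. For \Greedy this was already argued in the discussion preceding the theorem: a server advances in the finishing-time ordering only when it receives a job that raises its finishing time, which is exactly the escape clause permitted by Definition~\ref{def:monotone}. For \LF I would note that it employs a static ordering (by opening time), so the relative order of any two servers never changes while both remain alive, regardless of arrivals or departures; hence the monotonicity condition in Definition~\ref{def:monotone} holds trivially. Invoking Theorem~\ref{Thm: monotone-proof} then gives $\rho(\Greedy) \le 3d\mu+1$ and $\rho(\LF) \le 3d\mu+1$.

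For the lower bound $(\mu+1)d \le \rho(\ALG)$, I would invoke the fact that both \Greedy and \LF are \AF algorithms (each opens a new server only when the incoming job fits on no active server), together with the $(\mu+1)d$ lower bound on the competitive ratio of \emph{any} \AF algorithm proved by Murhekar et al.~\cite{murhekar2023brief}. Since that bound is stated for the entire \AF class and not for a particular tie-breaking rule, it applies verbatim to \Greedy and \LF.

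There is no real technical obstacle here; the corollary is genuinely a packaging of Theorem~\ref{Thm: monotone-proof} and the cited lower bound. The only points requiring care are the two membership checks—that \LF's static ordering makes it monotone, and that the Murhekar et al.\ construction is a lower bound for the general \AF class (so that it specializes to these two named algorithms)—both of which are immediate from the definitions.
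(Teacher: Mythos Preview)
Your proposal is correct and mirrors the paper's own reasoning: the paper does not give a separate proof of the corollary, but simply notes (immediately before stating it) that the upper bound is Theorem~\ref{Thm: monotone-proof} applied to the monotone \AF algorithms \Greedy and \LF, and that the lower bound is the $(\mu+1)d$ bound for arbitrary \AF algorithms from Murhekar et al.~\cite{murhekar2023brief}. The monotonicity of \Greedy and of static-order algorithms such as \LF is established in the paragraphs following Definition~\ref{def:monotone}, exactly as you invoke.
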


As mentioned earlier,  the proof of a $6\mu+8$ upper bound on the competitive ratio of \MTF given in \cite{RSiC2015MTF} also works for \Greedy in the 1-dimensional case. However, Corollary~\ref{cor:greedy} improves this bound. We conjecture that the correct bound on the competitive ratio of \Greedy is $\mu d + O(d)$.

\section{A Direct-Sum Property of \RSiC} \label{sec:HA}

%In this section we shown a $O(d\sqrt{\log \mu})$ upper bound for $d$-dimensional \RSiC, by giving a modified version of the $\HA$ algorithm that was initially proposed in \cite{Clairvoyant_HA} for the case $d=1$. For completeness, we briefly recall how the algorithm $\HA$ works. The algorithm classifies a job $r$ to be of type $T=(i,c)$ if
%\begin{equation}    \label{eq:HA_type}
%    \ell(r) \in [2^{i-1},2^i), \quad
%    a(r) \in [(c-1)2^i, c 2^i)
%\end{equation}
%for 
%    $1 \le i \le \log \mu$
%and 
%    $c \in \bbN$.
%The algorithm also maintains two categories of bins, called $GN$ and $CD$ bins, respectively. Each $CD$-type bin may only hold jobs of the same type, while $GN$ bins can hold any job. Upon the arrival of a job $r$ of type $T$, \HA checks if there is an  open $CD$-type bin that holds type $T$ jobs. If such a bin exists,
%\HA will pack $r$ into one of these bins in a \FF manner (opening a new CD-type bin if needed).
%Otherwise, it checks if the sum of sizes of all active jobs of type $T$ (including the size of the
%current job $r$) is strictly greater than 
%    $\frac{1}{2\sqrt{i}}$.
%If so, it opens a new $CD$-type bin and packs $r$ into it; otherwise, it packs $r$ into a $GN$-type bin in a \FF manner (opening a new GN-type bin if needed).

Given an arbitrary algorithm $\cA$ for $1$-dimensional  \RSiC, we define an algorithm, call it $\cA^{\oplus d}$, that works for  $d$-dimensional \RSiC, as follows. Let $\sigma$ be an input instance for the $d$-dimensional problem. We partition $\sigma$ as follows:
\[    \sigma = \sigma^{(1)} \cup \cdots \cup \sigma^{(d)},\]
where
    $\sigma^{(j)}$
is the subset of jobs $r$ for which 
    $\norm{\vecs(r)}_\infty$ 
is achieved at the $j$-th dimension. When $\norm{\vecs(r)}_\infty$ is achieved in more than one dimension, we break the tie arbitrarily. It is easy to see that this partitioning can be done online. Each $\sigma^{(j)}$ will be assigned to a different set of servers. 

% {\noindent\bf Observation 1: } Every $\sigma^{(j)}$ can be constructed online. 

The algorithm $\cA^{\oplus d}$ is defined as follows: on the arrival of a job $r$, decide online a unique dimension $j$ in which its size is maximum, and assign $r \in \sigma^{(j)}$. Next apply the algorithm $\cA$ (for 1-dimensional \RSiC) to process $\sigma^{(j)}$, pretending that the instance is $1$-dimensional by only looking at the size at the $j$-th coordinate, and ignoring the sizes of other dimensions, and assigning to servers that only contain jobs in $\sigma^{(j)}$.

%{\noindent\bf Observation: } Let $B$ be a server created by $\cA^{\oplus d}$ for  $\sigma^{(j)}$.%, let 
    %$\vecs(B,t) = \sum_r \vecs(r)$
%where $r$ ranges over all jobs in $B$ that are active at time $t$.
% Then, for any time $t$,
%     $\norm{B(t)}_\infty$
% is achieved at dimension $j$.

%Let $p(\cdot)$ denote the competitive ratio for an arbitrary (either for dimension $1$ or for dimension $d$.) algorithm.

\begin{theorem} \label{thm:direct_sum}
    Let $\cA$ be an arbitrary deterministic  algorithm for $1$-dimensional \RSiC. Then, $\cA^{\oplus d}$  works correctly for any $d$-dimensional \RSiC, and 
        $\rho(\cA^{\oplus d}) = d\cdot \rho(\cA)$. 
    Moreover, the guarantee on the competitive ratio holds for both strict and asymptotic competitive ratios.
\end{theorem}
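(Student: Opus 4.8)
The plan is to prove three things in turn: that $\cA^{\oplus d}$ is a feasible algorithm, that $\rho(\cA^{\oplus d}) \le d\cdot\rho(\cA)$, and that $\rho(\cA^{\oplus d}) \ge d\cdot\rho(\cA)$; the last direction is where the real work lies. \textbf{Feasibility and cost decomposition.} Since the jobs of $\sigma^{(j)}$ are routed only to servers dedicated to $\sigma^{(j)}$, and each group is processed by an independent copy of $\cA$ that reads only coordinate $j$, the cost splits exactly as $\cA^{\oplus d}(\sigma) = \sum_{j=1}^d \cA(\sigma^{(j)})$, where $\cA(\sigma^{(j)})$ denotes the cost of $\cA$ on the $1$-dimensional instance obtained by reading coordinate $j$. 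For feasibility I would note that every job $r\in\sigma^{(j)}$ satisfies $\vecs(r)_j = \norm{\vecs(r)}_\infty \ge \vecs(r)_k$ for all $k$; hence on any server holding only $\sigma^{(j)}$-jobs, the coordinate-$j$ load dominates the load in every other coordinate, so the $1$-dimensional capacity guarantee $\cA$ maintains in coordinate $j$ automatically forces feasibility in all $d$ coordinates.

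\textbf{Upper bound.} The key lemma is $\OPT(\sigma^{(j)}) \le \OPT(\sigma)$ for every $j$, where the left side is the $1$-dimensional optimum on $\sigma^{(j)}$ (coordinate $j$). To prove it, take an optimal $d$-dimensional schedule for $\sigma$ and restrict it to the jobs of $\sigma^{(j)}$, keeping only coordinate $j$; deleting the other jobs only lowers each server's coordinate-$j$ load, so this is a feasible $1$-dimensional schedule for $\sigma^{(j)}$, and the time its servers are busy with $\sigma^{(j)}$-jobs is at most the total open time, i.e. at most $\OPT(\sigma)$. Combining this with the decomposition and the $\rho(\cA)$-competitiveness of $\cA$ gives $\cA^{\oplus d}(\sigma) = \sum_j \cA(\sigma^{(j)}) \le \sum_j(\rho(\cA)\OPT(\sigma^{(j)}) + c) \le d\,\rho(\cA)\,\OPT(\sigma) + cd$, which yields $\rho(\cA^{\oplus d})\le d\,\rho(\cA)$ in the asymptotic case, and with $c=0$ in the strict case.

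\textbf{Lower bound.} Here I would use a $d$-copies construction. Fix a $1$-dimensional instance $\tau$ whose ratio $\cA(\tau)/\OPT(\tau)$ is within $\delta$ of $\rho(\cA)$, and build a $d$-dimensional instance $\sigma$ from $d$ time-aligned copies of $\tau$, where the $j$-th copy places each original size in coordinate $j$ and a tiny value $\epsilon$ in every other coordinate. Then the online partition sends the $j$-th copy entirely into $\sigma^{(j)}$, and $\cA$ runs on it exactly as on $\tau$, so $\cA^{\oplus d}(\sigma) = d\cdot\cA(\tau)$. On the other hand, an optimal $d$-dimensional solution can stack all $d$ copies of any group of base jobs into a single shared server, since copy $j$ loads only coordinate $j$ and negligibly loads the others; this gives $\OPT(\sigma)\le(1+o(1))\OPT(\tau)$ and hence a ratio approaching $d\cdot\cA(\tau)/\OPT(\tau) \ge d(\rho(\cA)-\delta)$. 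Letting $\delta\to 0$ (and taking $\OPT(\tau)\to\infty$ in the asymptotic case) matches the upper bound.

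\textbf{Main obstacle.} The delicate point is the estimate $\OPT(\sigma)\le(1+o(1))\OPT(\tau)$: because sizes must be strictly positive, stacking the $d$ copies in one server adds $(d-1)\epsilon$ per active job in each coordinate, which overflows capacity precisely when the base optimum packs servers tightly. I would fix this by first down-scaling every size of $\tau$ by a factor $(1-\eta)$ and using the \emph{unscaled} optimum as the packing template: applied to the scaled sizes, every server then carries load at most $1-\eta$, leaving headroom at least $\eta$ in each coordinate, so choosing $\epsilon < \eta/((d-1)N)$, where $N$ bounds the number of simultaneously active jobs per server, keeps every stacked load below $1$ while preserving the template's cost $\OPT(\tau)$. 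To ensure $\cA(\tau)$ and its ratio survive the rescaling, I would take $\tau$ in general position, so that every ``does it fit'' comparison is bounded away from the capacity by a margin $\gamma$, and use $\eta<\gamma/(1+\gamma)$; an easy induction then shows each fitting decision of $\cA$ is unchanged, so its entire packing and cost are identical on the scaled instance, whereas $\OPT$ can only drop. This makes the construction fully rigorous for strictly positive sizes and works verbatim for both the strict and the asymptotic competitive ratio.
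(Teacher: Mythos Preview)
Your feasibility and upper-bound arguments are essentially identical to the paper's. For the lower bound, however, the paper takes a much simpler route: it allows $0$ entries in the non-dominant coordinates and, given an arbitrary $1$-dimensional instance $H$, builds $\sigma$ by replacing each job $h$ with $d$ jobs whose size vectors are the columns of $s(h)\cdot I_d$. Then each $\sigma^{(j)}$ is literally a copy of $H$ living in coordinate $j$, so $\cA^{\oplus d}(\sigma)=d\cdot\cA(H)$; and because the $d$ copies load pairwise disjoint coordinates, any $1$-dimensional packing of $H$ lifts to a $d$-dimensional packing of $\sigma$ of identical cost, giving $\OPT(\sigma)=\OPT(H)$ exactly. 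No $\epsilon$, no rescaling, no headroom is needed, and the bound $\rho(\cA^{\oplus d})\ge d\cdot\cA(H)/\OPT(H)$ follows for \emph{every} $H$.

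Your worry about strictly positive sizes is reasonable given the stated constraint $\vecs_i\in(0,1)^d$, but the workaround you propose has a real gap. The ``general position'' step presumes that $\cA$ makes its decisions via ``does it fit'' comparisons and is therefore stable under a $(1-\eta)$ rescaling of all sizes. The theorem, however, is for an \emph{arbitrary} deterministic $\cA$: such an algorithm may branch on the exact numeric value of a size in a discontinuous way (e.g., behave one way if some size equals $1/2$ and completely differently otherwise), so rescaling can change its entire run and hence its cost. There is then no guarantee that a bad-ratio instance $\tau$ with the margin/stability you need exists at all, and the induction showing ``each fitting decision of $\cA$ is unchanged'' does not get off the ground. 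The paper's zero-entry construction avoids this issue entirely; if you insist on strictly positive entries, you would need a different argument that does not rely on perturbing the input fed to $\cA$.
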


\begin{proof}
    Firstly, we show that the algorithm $\cA^{\oplus d}$ does not violate the size constraint, i.e., the total size of all jobs in every server does not exceed $1^d$. 
    To see this, consider an arbitrary  job $r \in \sigma$ and suppose it is put into a bin $B$ by $\cA^{\oplus d}$. By the definition of $\cA^{\oplus d}$, we know before $r$ is put into $B$, either server  $B$ is empty (i.e., has not been created yet) in which case after $r$ is assigned to server $B$ the size constraint is trivially respected, or server $B$ is nonempty. In the latter case,  it only contains jobs in $\sigma^{(j)}$. In this case, we have
    \[
        \norm{\vecs(r) + B(t)}_\infty 
        = \vecs(r)_j + B(t)_j \le 1,
    \]
    where the equality follows from the fact that every job in server $B$
is in $\sigma^{(j)}$
    and also $r \in \sigma^{(j)}$. The inequality follows by the fact that we apply algorithm $\cA$ on $r \in \sigma^{(j)}$.

    Next, we show 
        $\rho(\cA^{\oplus d}) \le d\cdot \rho(\cA)$.
    By an abuse of notation, let $\cA(\sigma^{(j)})$ denote the cost of $\cA^{\oplus d}(\sigma)$ on the subset of inputs $\sigma^{(j)}$. Since
        $\sigma^{(j)} \subseteq \sigma$,
    one has
        $\OPT(\sigma^{(j)}) \le \OPT(\sigma)$
    for every $j$. 
    Let $\OPT'(\sigma^{(j)})$ denote the cost of the optimal solution that processes $\sigma^{(j)}$ by only focusing on the size of the $j$-th dimension. Then, for every $\rho > \rho(\cA)$ there exists a $c > 0$ such that
        $\cA(\sigma^{(j)}) \le \rho \cdot \OPT'(\sigma^{(j)}) + c$
    for every $j$.
    Observe that we have
        $\OPT'(\sigma^{(j)}) = \OPT(\sigma^{(j)})$.
    This is because every job in $\sigma^{(j)}$ satisfies that the size at the $j$-th dimension is the largest. With these, and by the definition of $\cA^{\oplus d}$, we have
    \begin{align*}
        \cA^{\oplus d}(\sigma)
        &= \sum_{j \in [d]} \cA(\sigma^{(j)}) \\
        &\le \sum_{j \in [d]}\left( \rho \cdot \OPT'(\sigma^{(j)})  + c \right) \\
        &= \sum_{j \in [d]}\left(\rho \cdot \OPT(\sigma^{(j)}) + c \right) \\
        &\le \sum_{j \in [d]}\left(\rho \cdot \OPT(\sigma) + c \right)
        = d \cdot \rho \cdot \OPT(\sigma) + cd. 
    \end{align*}
    Since $cd$ is a constant independent of input, and this inequality holds for all $\rho > \rho(\cA)$, it follows that $\rho(\cA^{\oplus d}) \le d \rho(\cA)$. Moreover, if $c = 0$ then $cd = 0$, so the competitive ratio guarantee preserves strictness.

    Lastly, we show 
        $\rho(\cA^{\oplus d}) \ge d\cdot \rho(\cA)$.
    Let $H$ be an arbitrary $1$-dimensional instance, from which we construct a $d$-dimensional instance $\sigma$ as follows. For every job 
        $h=(a(h), f(h), s(h)) \in H$,
    create $d$ jobs in $\sigma$ that have the same arrival and finishing time as $h$, and the size vectors are the $d$ column vectors of the matrix
        $s(h) \cdot I_d$
    where $I_d$ is the $d\times d$ identity matrix. Clearly, every $\sigma^{(j)}$ is simply a copy of $H$ in dimension $j$, while having $0$'s in all other dimensions. Hence, 
        $\ALG^{\oplus d}(\sigma) = d \cdot \ALG(H)$.
    Furthermore, Observe that
        $\OPT(H) = \OPT(\sigma)$,  
    where here by an abuse of notation we use $\OPT(H)$ to denote the cost of the optimal algorithm for the $1$-dimensional instance $H$, and $\OPT(\sigma)$ to denote the cost of the optimal algorithm for the $d$-dimensional instance $\sigma$. Hence,
    \[
        \rho(\ALG^{\oplus d}) 
        \ge \frac{\ALG^{\oplus d}(\sigma)}{\OPT(\sigma)}  
        = \frac{d \cdot \ALG(H)}{\OPT(\sigma)}  
        = d \cdot \frac{\ALG(H)}{\OPT(H)}.
    \]
    Because $H$ is arbitrary, the desired lower bound follows.
\end{proof}

In \cite{Clairvoyant_HA}, the Hybrid Algorithm $\HA$ is defined for $1$-dimensional clairvoyant \RSiC, and is shown to have a competitive ratio $\Theta(\sqrt{\log \mu})$. 

\begin{corollary}   \label{cor:HA}
    The algorithm $\HA^{\oplus d}$ for $d$-dimensional clairvoyant \RSiC has a competitive ratio 
        $\Theta(d\sqrt{\log \mu})$.
\end{corollary}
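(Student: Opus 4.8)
The plan is to obtain this corollary as an immediate instantiation of the direct-sum property established in Theorem~\ref{thm:direct_sum}, specialized to the one-dimensional clairvoyant algorithm $\HA$. First I would recall that $\HA$ is a \emph{deterministic} algorithm for $1$-dimensional clairvoyant \RSiC, which is precisely the hypothesis required by Theorem~\ref{thm:direct_sum}, and that by the result of \cite{Clairvoyant_HA} we have $\rho(\HA) = \Theta(\sqrt{\log \mu})$. Taking $\cA = \HA$ in the theorem then yields $\rho(\HA^{\oplus d}) = d \cdot \rho(\HA)$, and substituting the known $\Theta(\sqrt{\log \mu})$ bound gives the claimed $\Theta(d\sqrt{\log \mu})$.

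Before invoking the theorem I would verify two points. First, that $\HA^{\oplus d}$ is genuinely a clairvoyant $d$-dimensional algorithm: the composition decides the unique maximizing dimension $j$ for each arriving job $r$ using only its size vector $\vecs(r)$, which is revealed at arrival time in both the clairvoyant and non-clairvoyant models, so this partitioning step introduces no additional information requirement; the only use of finishing times is internal to the sub-instance runs of $\HA$, which are clairvoyant by assumption. Hence $\HA^{\oplus d}$ respects the clairvoyant information model, and by the first part of Theorem~\ref{thm:direct_sum} it also respects the capacity constraints, so it is a valid clairvoyant algorithm. Second, I would note that Theorem~\ref{thm:direct_sum} gives the scaling $\rho(\cA^{\oplus d}) = d\cdot\rho(\cA)$ as an \emph{exact} identity covering both the upper and the lower direction, which is exactly what is needed to transport a two-sided $\Theta$ bound rather than merely an $O$ or $\Omega$ bound.

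To make the $\Theta$ transfer rigorous, I would unpack the notation: $\rho(\HA) = \Theta(\sqrt{\log \mu})$ means there exist constants $c_1, c_2 > 0$ with $c_1 \sqrt{\log \mu} \le \rho(\HA) \le c_2 \sqrt{\log \mu}$ for all sufficiently large $\mu$. Multiplying through by $d$ and applying the identity $\rho(\HA^{\oplus d}) = d\cdot \rho(\HA)$ yields $c_1\, d\sqrt{\log \mu} \le \rho(\HA^{\oplus d}) \le c_2\, d\sqrt{\log \mu}$, which is precisely $\rho(\HA^{\oplus d}) = \Theta(d\sqrt{\log \mu})$. There is essentially no combinatorial obstacle here, since all of the real work is done in Theorem~\ref{thm:direct_sum}; the only point demanding care is the verification that the dimension-selection step preserves clairvoyance and that the exactness of the $d$-scaling legitimately carries \emph{both} sides of the $\Theta$, so I would state these explicitly rather than leave them implicit.
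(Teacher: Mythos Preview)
Your proposal is correct and follows exactly the paper's approach: the corollary is stated immediately after Theorem~\ref{thm:direct_sum} with the sentence recalling that $\rho(\HA)=\Theta(\sqrt{\log\mu})$ from \cite{Clairvoyant_HA}, and no further proof is given. Your additional checks (that the dimension-selection step preserves clairvoyance and that the exact identity in Theorem~\ref{thm:direct_sum} carries both sides of the $\Theta$) are more explicit than what the paper provides, but they are sound and the argument is the same.
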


\section{Lower bounds via online graph coloring}    \label{sec:LB}

 In this section, we show lower bounds on the competitive ratio of any algorithm for $d$-dimensional \RSiC. We consider deterministic algorithms in Section~\ref{sec:LB_deterministic} for both the clairvoyant and non-clairvoyant versions of the problem. In Section~\ref{sec:LB_randomized},
 we give lower bounds for randomized algorithms for both versions of the problem.
 
\subsection{Deterministic algorithms}   \label{sec:LB_deterministic}

For $1$-dimensional non-clairvoyant \RSiC, it is shown \cite[Theorem 1]{RSiC2015MTF}  that $\mu$ is a lower bound for \emph{any} deterministic algorithm. By adapting the online graph coloring lower bound construction from Halld{\'o}rsson-Szegedy \cite{HScoloring}, we show a $\widetilde{\Omega}(d\mu)$
lower bound for  $d$-dimensional non-clairvoyant \RSiC. For readers familiar with the online graph construction in \cite{HScoloring}, a size vector in the proof of Theorem \ref{thm:LB_dimd} below corresponds to a vertex in the online graph, and the size vectors of the jobs are chosen in the way such that jobs that fit into one server correspond to an independent set of vertices of the online graph. The idea of applying online graph coloring to online vector bin packing has been discussed in \cite{azar2013tightVBP,LI202370VBP}.

\begin{theorem} \label{thm:LB_dimd}
    There exists a constant $d_0$ (which can take $d_0=280$) such that for every dimension
        $d \ge d_0$ 
    and 
        $\mu \le \log d - 2$,
    every deterministic algorithm for non-clairvoyant \RSiC has a competitive ratio $\ge \frac{d}{ \log^3 d} \cdot \mu$.
    For clairvoyant \RSiC, the lower bound 
        $\Omega( \max\{\sqrt{\log \mu}, \frac{d}{ \log^2 d}\})$
    holds.
\end{theorem}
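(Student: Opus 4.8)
The plan is to reduce the $d$-dimensional non-clairvoyant \RSiC problem to online graph coloring via the Halld\'orsson--Szegedy construction \cite{HScoloring}, exploiting the hint in the paper that size vectors should play the role of vertices and that a collection of jobs fits into a single server precisely when the corresponding vertices form an independent set. Concretely, I would start from an online graph $G$ on which every deterministic online coloring algorithm is forced to use $\widetilde{\Omega}(d / \log^2 d)$ times the chromatic number of colors when $G$ has chromatic number roughly $d$ (this is the quantitative form of the Halld\'orsson--Szegedy lower bound for $n$-vertex graphs). The key translation step is to encode each arriving vertex $v$ as a job whose size vector $\vecs(v) \in (0,1)^d$ has the property that two vertices are adjacent in $G$ if and only if the two size vectors \emph{cannot} coexist in one server, i.e.\ some coordinate of $\vecs(u)+\vecs(v)$ exceeds $1$; conversely an independent set of vertices maps to a set of jobs that jointly fit in one server. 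Under such an encoding, a server used by the \RSiC algorithm corresponds exactly to a color class (an independent set), so the number of servers alive at a given time is at least the number of colors the induced online coloring is forced to use.

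The second ingredient is to fold in the temporal parameter $\mu$ to amplify the bound by an extra factor of $\mu$. First I would handle the pure deterministic lower bound of $\widetilde{\Omega}(d/\log^2 d)$ by giving all jobs the same (unit) duration, so span and utilization are controlled and $\OPT$ is essentially $\myspan(\cR)$ times the chromatic-number-many servers, while the algorithm is forced into $\widetilde{\Omega}(d/\log^2 d)$ times as many servers. Then, to obtain the $\widetilde{\Omega}(d\mu)$ bound for non-clairvoyant \RSiC, I would make the forced-open servers persist: the adversary chooses durations so that servers the algorithm is tricked into opening stay alive for the full length $\mu$ (the algorithm cannot distinguish short from long jobs in the non-clairvoyant setting), whereas $\OPT$, knowing the durations, can consolidate and close servers quickly. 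This is the standard mechanism by which the $\mu$-type factor enters all \RSiC lower bounds (as in \cite{RSiC2015MTF}); here it multiplies cleanly against the coloring factor because both effects are orthogonal — one counts colors (servers at a fixed time) and the other counts duration (cost per server). The stated constraint $\mu \le \log d - 2$ is what guarantees the coloring construction has enough ``depth'' of iterated rounds to realize the $\mu$-fold amplification within a graph whose chromatic number is still around $d$.

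For the clairvoyant case, the $\mu$-amplification via indistinguishable durations is no longer available, since the algorithm learns finishing times up front; so I would argue the clairvoyant bound $\Omega(\max\{\sqrt{\log\mu}, d/\log^2 d\})$ as the maximum of two separate constructions. The term $\sqrt{\log\mu}$ comes directly from the known $1$-dimensional clairvoyant lower bound in \cite{Clairvoyant_HA}, which transfers to any $d \ge 1$ by embedding a $1$-dimensional hard instance (e.g.\ in a single coordinate). The term $d/\log^2 d$ comes from applying the coloring reduction above with unit durations, where clairvoyance gives no advantage because all jobs are identical in duration; thus the same forcing of $\widetilde{\Omega}(d/\log^2 d)$ colors/servers applies and $\OPT$ pays only the chromatic number. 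Taking the pointwise maximum over the two instances yields the claimed bound.

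The main obstacle I anticipate is the encoding step: constructing size vectors in $(0,1)^d$ that \emph{faithfully} realize the adjacency structure of the Halld\'orsson--Szegedy graph as the ``cannot-share-a-server'' relation, in both directions simultaneously. Making non-adjacency imply joint feasibility (an independent set must fit in one server across all $d$ coordinates at once) is delicate, because capacity constraints couple all coordinates, and one must ensure that no spurious coordinate-overflow arises from an independent set while every edge does force an overflow in at least one coordinate. Getting the numerical slack right — so that the construction tolerates the iterated/layered structure of the coloring lower bound and the $\mu$-length durations — and verifying that the induced online coloring instance is exactly the adversarial one (so that the algorithm's server-assignment decisions are genuinely forced to mimic a coloring) is where the real work lies; the rest reduces to the accounting of $\OPT$ via Corollary~\ref{prop:OPT-bounds} and Proposition~\ref{Prop:opt-alg-lb}.
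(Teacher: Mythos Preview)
Your reduction to Halld\'orsson--Szegedy online coloring is exactly the paper's approach, and your identification of the main obstacle (encoding adjacency faithfully via size vectors so that independence is precisely joint feasibility) is on target; the paper resolves it with an explicit construction where $\vecs_i^i = 1$, $\vecs_i^j \in \{0, 1/d\}$ for $j < i$ depending on the adversary's prior choices, and $\vecs_i^j = 0$ for $j > i$. Your clairvoyant argument (unit durations for the $d/\log^2 d$ term, plus the separate $\Omega(\sqrt{\log \mu})$ bound imported from \cite{Clairvoyant_HA}) also matches the paper exactly.

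However, your treatment of the $\mu$-amplification has a real gap. You write that the adversary ``chooses durations so that servers the algorithm is tricked into opening stay alive for the full length $\mu$'' --- but if you make enough jobs long to force \emph{all} of $\ALG$'s $\widetilde{\Omega}(d)$ bins to live for $\mu$, then $\OPT$ must also house those long jobs somewhere, and $\OPT$'s cost scales with $\mu$ as well, killing the ratio. The paper's actual mechanism is a pigeonhole step you have not identified: after the coloring phase forces $\ALG$ into at least $d'/k$ bins while $\ADV$ uses only $2k$ servers, some single $\ADV$ server $Q$ must have its jobs scattered across at least $(d'/k)/(2k) = d'/(2k^2)$ distinct $\ALG$ bins. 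The adversary sets only the jobs in $Q$ to duration $\mu$ and all others to duration $1$. Then $\OPT \le (2k-1) + \mu$ (one long server, the rest short) while $\ALG \ge (d'/(2k^2)) \cdot \mu$.

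Relatedly, your explanation of the constraint $\mu \le \log d - 2$ is incorrect: it has nothing to do with ``depth of iterated rounds'' in the coloring construction. It is simply the arithmetic condition $\mu \le 2k$ needed so that the denominator $2k - 1 + \mu$ in the ratio above is $O(k)$ rather than $O(\mu)$; since $k = \Theta(\log d)$, this yields the stated bound on $\mu$.
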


\begin{proof}
    Let $\ALG$ be an arbitrary deterministic algorithm for $d$-dimensional non-clairvoyant \RSiC. 
Choose $k$ to be the largest integer for which 
        $d \ge d' = {2k \choose k} \cdot k \ge 2^{2k}$.
     Let $\ADV$ denote the adversary, which will adaptively construct an instance 
        $\sigma = (\sigma_1, \ldots, \sigma_{d'})$
    consisting of $d'$ jobs, such that $\ALG$ uses at least
        $d'/k$
    servers on $\sigma$, while $\OPT$ uses at most $2k$ servers. In fact, $\ADV$ will construct a solution using no more than $2k$ servers.     Note that all jobs arrive at the same time $0$. At the end of $d'$ steps, once all jobs have been processed by $\ALG$, the adversary specifies the finish times of all jobs.

    % Let
    %     $\sigma_i = (a_i, f_i, \vecs_i)$,
    % where
    %     $\vecs_i \in [0,1]^d$.
    % Since it is non-clairvoyant, the adversary will choose some $f_i$'s to be $1$, and some to be $\mu$, to be determined later. The main task for the adversary is to construct the size vectors
    %     $\vecs_1, \ldots, \vecs_d$.
    % Because all $a_i=0$, for simplicity we often use $s_i$ to mea job $\sigma_i$. For example, when we say $\ALG$ assigns $s_i$ to a server we meant it assigns job $\sigma_i$ in that server.

    % Let $k$ be a parameter to be determined later. The goal of the (adaptive) construction is the following.

    % {\bf\noindent Goal:} $\ALG$ uses at least
    %     $d/k$
    % servers on $\sigma$, while $\OPT$ uses at most $2k$ servers. 
    
    % In fact, the adversary will construct a solution using at most $2k$ servers. 
    To avoid confusion, we call the servers used by $\ALG$ {\em bins}, while the servers used by $\ADV$ will be called {\em servers}. We name the $2k$ servers that will be used by $\ADV$ as $1,2,\ldots,2k$. For an arbitrary job $r \in \sigma$, 
 %   let     $\ALG(r)$ denote the bin to which $\ALG$ assigns $r$, and 
   let  $\ADV(r) \in [2k]$ denote the server to which $\ADV$ assigns $r$. 
    Let $B$ be an arbitrary bin used by $\ALG$, and let $\hat{B}_{\le i}$ denote the set of jobs in $B$ after the first $i$ jobs have been processed, and let $B_{\le i}$ denote the sum of the sizes of these jobs. %\denis{this redefines the notation $B(t)$ defined in the preliminaries, where $t$ is a time.}
    %\denis{Perhaps, we can use $i$ instead of $t$, and $\hat{B}_{\le i}$ instead of $\hat{B}(t)$, and $B_{\le i}$ instead of $B(t)$.}
     Define
    \[
        \ADV(B, i) := \{\ADV(r): r \in \hat{B}_{\le i}\} ,
    \]
    i.e., it is the set of servers that the adversary used to assign jobs present in bin $B$ after the first $i$ jobs have been processed. 
    %Let
   %     $|B|$
 %   denote the number of jobs in bin $B$,
    %
      %  $|\ADV(B,t)|$
   % denote the number of servers in $\ADV(B)$.     
    
%   So, after $\ALG$ assigned job $\sigma_t$ to some bin, $\ADV$ will assign $\sigma_t$ to some server.

    In step $i$, the adversary specifies the job $\sigma_i$, whose size is defined as follows. Let $X$ denote the set of servers used by $\ALG$ that have exactly $k$ jobs placed in them after the first $i-1$ jobs have been processed. We define 
    \begin{equation}    \label{eq:def_Ft}
        \cF_i = \{\ADV(B, i-1): B \in X \}.
    \end{equation}
    Note that $\cF_i$ is a set of subsets of $[2k]$; each element of $\cF_i$ is the set of servers in which the adversary placed the jobs of $k$-sized bins of $\ALG$.  
    The adversary chooses an arbitrary subset $A_i$ of size $k$
    \begin{equation}    \label{eq:choose_At}
        A_i \in {[2k] \choose k} - \cF_i,
    \end{equation}
  %  The meaning of $A_t$ is that adversary will assign $s_t$ to one of the $k$ servers in $A_t$ (so, the letter $A$ is chosen to mean available). 

Since each distinct subset of $[2k]$ in $\cF_i$ corresponds to $k$ jobs in the input sequence $\sigma$, and recalling that the length of $\sigma$ is $d'={2k \choose k}k$, the set $A_i$ always exists.

The adversary now defines $\vecs_i \in [0,1]^{d'}$ as follows: 
    \begin{equation}    \label{eq:def_s}
        \vecs_i^j =
        \begin{cases}
            0, &\quad \text{if $j<i$ and $\ADV(\sigma_j) \in A_i$}, \\
            1/d, &\quad \text{if $j<i$ and $\ADV(\sigma_j) \not\in A_i$}, \\
            1, &\quad \text{if $j=i$}, \\
            0, &\quad \text{if $j>i$}. 
        \end{cases}    
    \end{equation}

The adversary assigns $\sigma_i$ to an arbitrary server $Q \in A_i$. We show below that this is a valid assignment, that is, it does not violate capacity constraints.  

\begin{claim} \label{valid-assignment}
%The assignment of $\sigma_t$ to $Q \in A_t$ is a valid assignment. 
Let 
        $Q \in A_i$
    denote an arbitrary server in $A_i$. Then
    \begin{equation}    \label{eq:adv_sln_correct}
        \norm{\vecs_i + Q(i-1)}_\infty \le 1
    \end{equation}
\end{claim}
  \begin{proof}  
  Indeed, by \eqref{eq:def_s},
         $\sigma_j \in \hat{Q}(i-1)$
     implies
         $\vecs_i^j = 0$.
     Also, because $j<i$, again by \eqref{eq:def_s} one has $\vecs_j^i = 0$.
 %   Hence, \eqref{eq:adv_sln_correct} holds. 
 The claim follows.
    \end{proof}
  The adversary now presents job $\sigma_i$ to $\ALG$ and suppose $\ALG$ assigns $\sigma_i$ to bin $B$. Next, we show that after the assignment, bin $B$ (and indeed any bin of $\ALG$) contains at most $k$ jobs.

\begin{claim} \label{size-at-most-k}
$|\hat{A}(i)|\leq k$ for every bin $A$ in use by $\ALG$ after $i$ jobs have been processed. 
\end{claim}
\begin{proof}
Assume this is true inductively after $i-1$ jobs have been processed. Since $\sigma_i$ was added to bin $B$, we only need to argue that bin $B$ had at most $k-1$ jobs after $i-1$ jobs were processed. Suppose instead that $|\hat{B}(i-1)| = k$. Then $\ADV(B, i-1) \in \cF_i$, and so $A_i \neq \ADV(B, i-1)$. 
%
% it can be seen from the choice of $C_t$ that for any two distinct jobs $r$ and $r'$ in $B_t$, we have $\ADV(r) \new \ADV(r')$. This further implies $|ADV(B_t)| = |B_t|$.

% It suffices to show that if a bin $B$ used by $\ALG$ has reaches $|B|=k$, then no more job can be put into $B$ by $\ALG$. Indeed, by Property 1, we know that
%         $|\ADV(B)| = |B| = k$.
%     Hence,
%         $A_t \neq \ADV(B)$.
    Since both $A_i$ and $\ADV(B, i-1)$ are of size $k$, there must exist at least one job
        $\sigma_j \in \hat{B}(i-1)$
    for which
        $\ADV(\sigma_j) \not\in A_i$.
    This implies
        $\vecs_i^j + \vecs_j^j = 1/d + 1 > 1$.
    Hence,
        $\vecs_i$
    does not fit into bin $B$, which contradicts the choice of $B$ by $\ALG$.
    \end{proof}

Claim~\ref{valid-assignment} shows that the optimal assignment uses at most $2k$ servers while Claim~\ref{size-at-most-k} shows that $\ALG$ needs at least $d'/k$ bins to assign the jobs in $\sigma$.

    We are now ready to determine the choice of $f_i$, the finishing time of job $\sigma_i$. Observe that there exists at least one server of the adversary,  call it $Q$, such that jobs in $\hat{Q}_{\le d'}$ appear in at least
        $(d'/k)/(2k) = d'/2k^2$
    distinct bins of $\ALG$. 
    The adversary sets the duration of jobs $\hat{Q}_{\le d'}$ to be $\mu$, and the duration of all other jobs to be $1$.  
    
    Hence, the cost of $\OPT$ is at most
        $2k-1 + \mu$,
    while the cost of $\ALG$ is at least
        $d'\mu /2k^2$. Using the fact that for $k \geq 4$, we have 
        $d' = {2k \choose k} \cdot k \ge 2^{2k}$,  the competitive ratio of $\ALG$  is at least    

%     It remains to choose $k$.
%     By the adversary's construction \eqref{eq:choose_At}, the adversary can present at least 
%         ${2k \choose k} \cdot k$
%     size vectors, where ${2k \choose k}$ is the number of all subsets of size $k$ in $[2k]$, and every $k$-subset (of servers)  becomes unavailable only after at least $k$ vectors created. 
% \lata{I think it would be better to write this part just for arbitrary d, but haven't done it yet.}    
    % there exists at least one adversary's server, call it $q^*$, such that jobs in $q^*$ appear in at least
    %     $(d/k)/(2k) = d/2k^2$
    % distinct bins of $\ALG$. 
    % Since the algorithm is non-clairvoyant, the adversary sets the duration of jobs $q^*$ to be $\mu$, and the duration of other jobs to be $1$.  Hence, the cost of $\OPT$ is at most
    %     $2k-1 + \mu$,
    % while the cost of $\ALG$ is at least
    %     $d\mu /2k^2$.

    \[
        \frac{d'\mu /2k^2}{2k-1 + \mu} \ge \frac{d'}{8k^3} \cdot \mu \ge \frac{d'}{\log^3 d'} \cdot \mu,
    \]
    provided
        $\mu \le 2k$,
    and $k \ge 4$. Next, observe that 
    % When $k \ge 4$, we have
    %     $d = {2k \choose k} \cdot k \ge 280$,
    % and
    %     $\mu \le 2k \le \log d$.

    % We have shown the claimed lower bound on competitive ratio on non-clairvoyant algorithms for dimension $d = {2k \choose k} \cdot k$ for any integer $k \geq 4$. For arbitrary dimension $d$, choose $k$ to be the largest for which 
    %     $d \ge d_m = {2k \choose k} \cdot k \ge 2^{2k}$.
    % Then, we can construct an instance for dimension $d$ by simply adding $0$'s in the extra $d-d_m$ dimensions. This proves for dimension $d$ we have a lower bound
    %     $\frac{d_m}{\log^3 d_m} \cdot \mu$.
    % By the choice of $k$, we have
    \begin{align*}
        d &< {2k+1 \choose k+1} \cdot (k+1)
            \le {2k+1 \choose k+1} \cdot k+ 2^{2k+1}
            \le 2 {2k \choose k} \cdot k+ 2^{2k+1}  \\
        &\le 2d' + 2d'  = 4d'
    \end{align*}
    Hence,
        $\frac{d'}{\log^3 d'} \cdot \mu \ge \frac{d}{4 \log^3 d} \cdot \mu$
    which gives the desired bound on the competitive ratio for dimension $d>280$ (which ensures that $k \geq 4$), and 
        $\mu \le \log (d/4) = \log d - 2$.

    Finally, if the model is clairvoyant, the adversary sets all jobs to be of duration $1$, and obtains a lower bound of competitive ratio 
        $(d/k)/(2k) = d/2k^2 = \Omega(d/\log^2 d)$
    that is independent of $\mu$. The other lower bound 
        $\Omega(\sqrt{\log \mu})$
    comes from the lower bound for $1$-dimensional clairvoyant \RSiC established in \cite{Clairvoyant_HA}.
\end{proof}

\subsection{Randomized algorithms}  \label{sec:LB_randomized}

\begin{theorem} \label{thm:LB_randomized_dim1_non-clairvoyant}
    In $1$-dimensional non-clairvoyant \RSiC,
    any randomized algorithm has a competitive ratio at least
        $\frac{1-e^{-1}}{2}\cdot \mu$.
\end{theorem}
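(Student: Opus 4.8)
The plan is to invoke Yao's minimax principle: to lower bound the competitive ratio of every randomized algorithm it suffices to exhibit a single distribution $\mathcal{D}$ over inputs and argue that every \emph{deterministic} online algorithm $\ALG$ satisfies $\Ex_{\sigma \sim \mathcal{D}}[\ALG(\sigma)] \ge (\tfrac{1-e^{-1}}{2}\mu - o(1)) \cdot \Ex_{\sigma \sim \mathcal{D}}[\OPT(\sigma)]$. Because the instance I construct will force $\Ex[\OPT] \to \infty$, the additive constant $c$ in the definition of asymptotic competitiveness is harmless, and the ratio of expectations will yield the claimed bound after letting the instance size grow.

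For the distribution, assume for simplicity that $\mu$ is an integer and fix a large parameter $m$. At time $0$ I release $N = m\mu$ identical-looking jobs, each of size $1/\mu$, so that a server holds at most $\mu$ of them (exactly $\mu$ when full). Independently, each job is assigned duration $\mu$ (``long'') with probability $1/\mu$, and duration $1$ (``short'') otherwise. The key feature is that all jobs arrive simultaneously and look identical, so any deterministic algorithm must commit every job to a server with no information about its duration, which is revealed only on departure. First I would bound $\OPT$: knowing all durations, the optimum segregates the $L$ long jobs, packing them $\mu$ to a server ($\lceil L/\mu\rceil$ servers of cost $\mu$), and packs the short jobs $\mu$ to a server (cost $1$ each). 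Since $\Ex[L] = m$, this gives $\Ex[\OPT] \le \Ex[\mu \lceil L/\mu \rceil] + \Ex[\lceil (N-L)/\mu \rceil] \le 2m + O(\mu)$, consistent with the general bound $\OPT \ge \util \approx 2m$.

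Next I would lower bound $\Ex[\ALG]$. Any deterministic algorithm induces a fixed partition of the $N$ jobs into server-groups of size at most $\mu$, say of sizes $n_1, \dots, n_g$ with $\sum_i n_i = N$. A group is rented until its last job departs, so any group containing at least one long job is ``blocked'' and costs at least $\mu$; hence
\[
    \Ex[\ALG] \ge \mu \sum_{i=1}^{g} \left(1 - (1-\tfrac{1}{\mu})^{n_i}\right).
\]
Writing $f(n) = 1 - (1-1/\mu)^n$, the function $f$ is concave with $f(0) = 0$ and therefore subadditive, so for a fixed total $\sum_i n_i = N$ the sum $\sum_i f(n_i)$ is minimized by merging jobs into as few, and as large, groups as possible, namely $g = m$ groups each of size $\mu$. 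Thus $\sum_i f(n_i) \ge m\, f(\mu) = m(1 - (1-1/\mu)^\mu) \ge m(1-e^{-1})$, using $(1-1/\mu)^\mu \le e^{-1}$. This gives $\Ex[\ALG] \ge (1-e^{-1})\mu m$ for \emph{every} deterministic algorithm, and combining with the $\OPT$ bound the ratio of expectations is at least $\frac{(1-e^{-1})\mu m}{2m + O(\mu)}$, which tends to $\frac{1-e^{-1}}{2}\mu$ as $m \to \infty$.

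The crux is the lower bound on $\Ex[\ALG]$, and specifically the claim that the densest packing minimizes the expected number of blocked servers: this is exactly where non-clairvoyance bites, since the algorithm cannot cluster the (unknown) long jobs, and a $1 - (1-1/\mu)^\mu \approx 1 - e^{-1}$ fraction of its full servers ends up contaminated no matter how it packs. I expect the main obstacle to be making the subadditivity minimization fully rigorous over all partitions (confirming that spreading jobs into smaller groups only \emph{increases} the blocked count), together with the routine technicalities: concentrating $L$ around its mean $m$, handling non-integer $\mu$ by instead using size $1/\lceil \mu \rceil$, and discharging the additive constant of the asymptotic competitive ratio by taking $m \to \infty$.
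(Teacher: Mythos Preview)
Your argument is correct and reaches the same bound, but via a different construction and a cleaner analysis than the paper's.

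The paper also uses Yao's principle, but its distribution is different: it releases $k^2$ jobs of size $1/k$ (so a server holds up to $k$ jobs, with $k$ an arbitrary large parameter decoupled from $\mu$) and picks \emph{exactly} $k$ of them uniformly at random to be long. This makes $\OPT \le (k-1)+\mu$ deterministically, since one server can host all long jobs. For the algorithm's cost, rather than optimizing $\sum_i f(n_i)$ via concavity, the paper coalesces the algorithm's servers into ``super-groups'' each holding between $k$ and $2k$ jobs (hence at least $k/2$ super-groups), and shows via a hypergeometric tail bound that each super-group contains a long job with probability at least $1-e^{-1}$; thus in expectation at least $(1-e^{-1})(k/2-1)$ of the algorithm's servers stay open for $\mu$, and letting $k\to\infty$ gives the ratio.

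Your concavity step---equivalently, $f(n)\ge (n/\mu)f(\mu)$ for concave $f$ with $f(0)=0$, so $\sum_i f(n_i)\ge (N/\mu)f(\mu)=m(1-(1-1/\mu)^\mu)$---is more direct and sidesteps the super-grouping bookkeeping; the i.i.d.\ Bernoulli durations are what make this one-line optimization available. The paper's route buys a deterministic bound on $\OPT$ (no expectation to control) and keeps the server capacity independent of $\mu$, but both arguments are equally valid paths to the same constant $\tfrac{1-e^{-1}}{2}$.
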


\begin{proof}
    We use Yao's principle.  Consider the following distributional input: $k^2$ jobs each of size $1/k$, uniformly at random pick $k$ among $k^2$ jobs to be of duration $\mu$, and let the rest $k^2-k$ jobs to be of duration $1$. 
    Let $\ALG$ be an arbitrary deterministic algorithm. We show that in expectation $\ALG$ has cost $\Omega(k\mu)$. Since $\OPT$ has cost 
        $\le k-1  + \mu$,
    this gives the competitive ratio $\Omega(\mu)$ as desired. 
    
    Let
        $A_1, \ldots, A_m$
    be the $m$ servers that $\ALG$ uses for the above instance. Let
        $|A_i|$
    denote the number of jobs in $A_i$, then
        $1\le |A_i| \le k$.
    We partition  these $m$ servers into $p$ groups 
        $B_1, \ldots, B_{p-1}, B_p$
    such that the number of jobs in each group $B_i$ contains $\ge k$ jobs and $< 2k$ jobs, except perhaps the last group $B_p$ which may contain less than $k$ jobs. Note that such a partition exists by simply partitioning greedily. Hence,
        $p \ge k^2 / 2k = k/2$.
    Let $|B_i|$ denote the number of jobs in group $B_i$.
    Then, for  every 
        $1 \le i \le p-1$, 
    \begin{align*}
        &\Pr[B_i \text{ contains at least one job of duration } \mu]
        = 1 -    \frac{{k^2 -|B_i| \choose k}}{{k^2 \choose k}} \\
        &\ge 1 - \frac{{k^2 -k \choose k}}{{k^2 \choose k}}
        \ge 1 - (1-\frac{1}{k})^k
        \ge 1 - e^{-1}.
    \end{align*}
    Let 
        $X_i \in \{0,1\}$ 
    be a random variable denoting whether group $B_i$ contains some job of duration $\mu$ or not. Then, by the linearity of expectation, the expected cost of $\ALG$ is at least
    \begin{align*}
        \mu \cdot \Ex \left[\sum_{i=1}^{p-1} X_i \right]
        &= \mu \cdot \sum_{i=1}^{p-1} \Ex[X_i] 
        \ge \mu \cdot (p-1) (1-e^{-1})  \\
        &\ge \mu \cdot (k/2 - 1) (1-e^{-1}). 
    \end{align*}
    Hence, the competitive ratio is at least
        $\frac{1-e^{-1}}{2} \cdot \frac{k-2}{k-1+\mu} \cdot \mu$.
    For every $\mu$, since we can pick $k$ to be arbitrarily large, we get the ratio is at least 
        $\frac{1-e^{-1}}{2} \cdot \mu$ 
    as claimed.
\end{proof}

\begin{theorem} \label{thm:LB_randomized_dimd_non-clairvoyant}
    There exists a constant $d_0$ (can take $d_0=280$) such that for every dimension
        $d \ge d_0$ 
    and 
        $\mu \le \log d - 2$,
    every randomized algorithm for $d$-dimensional non-clairvoyant \RSiC has a competitive ratio 
        $\ge \Omega(\frac{d}{ \log^4 d} \cdot \mu)$.
    For $d$-dimensional clairvoyant \RSiC, the lower bound 
        $\Omega(\frac{d}{ \log^2 d})$
    holds.
\end{theorem}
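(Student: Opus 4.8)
The plan is to prove the randomized lower bound via Yao's principle, reducing the task to exhibiting a single distribution $\mathcal{D}$ over $d$-dimensional instances on which every \emph{deterministic} algorithm $\ALG$ incurs expected cost $\Ex_{\sigma \sim \mathcal{D}}[\ALG(\sigma)]$ that is $\widetilde{\Omega}(d\mu)$ times an upper bound on $\Ex_{\sigma \sim \mathcal{D}}[\OPT(\sigma)]$. The distribution will fuse the two constructions already developed in this section: the size-vector (``online graph coloring'') gadget underlying Theorem~\ref{thm:LB_dimd}, which forces many bins, and the random-duration mechanism of Theorem~\ref{thm:LB_randomized_dim1_non-clairvoyant}, which supplies the factor $\mu$.

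First I would make the coloring gadget oblivious. In Theorem~\ref{thm:LB_dimd} the adversary adaptively picks each $k$-subset $A_i \in \binom{[2k]}{k}$ to avoid the family $\cF_i$ of server-sets of already-full bins; since Yao's principle needs a fixed distribution, I would instead draw the whole sequence $A_1,\dots,A_{d'}$ (together with the assignment $\ADV$) from a fixed random process — e.g.\ a uniformly random ordering in which each of the $\binom{2k}{k}$ subsets appears $k$ times — and define the sizes by \eqref{eq:def_s} exactly as before. The claim to establish is that any deterministic $\ALG$ knowing only $\mathcal{D}$ (not its realization) still opens $\Omega(d'/k)$ bins in expectation, while $\OPT$ uses at most $2k$ servers. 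With unit durations this already yields the clairvoyant bound $\Omega(d'/k^2) = \Omega(d/\log^2 d)$.

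Next I would overlay random durations for the non-clairvoyant case. Designate a random one of $\ADV$'s $2k$ servers (equivalently, a random color class of $\Theta(d'/k)$ jobs) to have duration $\mu$ and all remaining jobs duration $1$; this keeps $\OPT \le 2k-1+\mu$ since those jobs pack into a single server. Then, mimicking the grouping argument of Theorem~\ref{thm:LB_randomized_dim1_non-clairvoyant}, I would partition $\ALG$'s bins into groups, each containing enough jobs that a group meets the special color class with constant probability, so that a constant fraction of the $\Omega(d'/k)$ bins stay alive for the full duration $\mu$ in expectation. This gives $\Ex[\ALG] = \widetilde{\Omega}(\mu d'/k^2)$ and, after dividing by $\OPT \le 2k-1+\mu \le 4k$ (using $\mu \le 2k$) and translating $d',k$ back to $d$, the stated bound $\Omega(\tfrac{d}{\log^4 d}\mu)$.

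The hard part will be the de-adaptification in the first step. In the deterministic proof, Claim~\ref{size-at-most-k} — no bin holds more than $k$ jobs — relied on the adaptive guarantee $A_i \notin \cF_i$; with random subsets a bin can occasionally absorb an extra job whenever the freshly drawn $A_i$ coincides with the server-set of a full bin. I would control this by a union/concentration bound over the $\binom{2k}{k}$ possible subsets, showing that such collisions are rare enough that the expected bin count remains $\Omega(d'/k)$; this is precisely where the analysis is delicate and where the extra $\mathrm{polylog}(d)$ factor (the gap between $\log^3$ and $\log^4$) is incurred. Secondary checks — that the duration randomization does not inflate $\OPT$ and that the validity argument of Claim~\ref{valid-assignment} survives oblivious sampling — should follow the deterministic proof with only minor changes.
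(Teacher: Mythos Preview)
Your plan is correct and matches the paper's approach at the structural level: Yao's principle, an oblivious version of the Halld{\'o}rsson--Szegedy coloring gadget to force $\Omega(d'/k)$ bins in expectation, and a randomly selected color class given duration $\mu$ to supply the extra factor. There are two differences in execution worth noting. First, the paper does not carry out the de-adaptification itself; it simply invokes \cite{HScoloring}, where it is already shown that replacing the adaptive choice of $A_i$ by a uniformly random $k$-subset of $[2k]$ (and assigning to a random server in $A_i$) still forces $\Omega(d'/k)$ bins in expectation against any deterministic algorithm --- so your proposed union/concentration analysis is unnecessary, and the extra $\log$ you anticipate there is not where the extra $\log$ actually arises. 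Second, for the $\mu$ factor the paper avoids your grouping argument entirely: since the algorithm's bin assignment is fixed once the $A_i$'s are fixed (durations are unseen in the non-clairvoyant model), pigeonhole gives one adversary server $Q^*$ whose jobs span $\Omega(d'/k)/(2k)=\Omega(d'/k^2)$ bins, and the uniformly random long server $Q$ equals $Q^*$ with probability $1/(2k)$, giving expected cost $\Omega(d'\mu/k^3)$ directly; dividing by $\OPT\le 2k-1+\mu\le 4k$ then yields the stated $\Omega(d\mu/k^4)$. This is where the fourth $\log$ comes from --- the $1/(2k)$ hit for guessing $Q$ --- not from slack in the de-adaptification.
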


\begin{proof}
    We apply Yao's principle. 
    In \cite{HScoloring}, it is shown that the online graph lower bound construction can be modified to work against randomized algorithms, with only a loss of a constant. We refer the interested reader to the original paper \cite{HScoloring} for  details. Here we only give the necessary modifications, based on the proof given in Theorem \ref{thm:LB_dimd}. The oblivious adversary constructs a distributional input, as follows. 
    \begin{itemize}
        \item The adversary uniformly at random picks a server $Q \in [2k]$; 
        
        \item In step \eqref{eq:choose_At}, the adversary instead chooses a random subset $A_i \subseteq [2k]$ of size $k$;

        \item the vector $\vecs_i$ is defined in the same way;

        \item the adversary assigns $\vecs_i$ to a random server $X \in A_i$;

        \item if $X=Q$, the adversary sets the duration of $\vecs_i$ to be $\mu$, otherwise to be $1$.
    \end{itemize}
     This finishes the construction of the distributional input. Note that the adversary is oblivious. 
     
    By the same argument in \cite{HScoloring}, under this distributional input, one can show that any deterministic algorithm uses in expectation
        $\Omega(d/k)$
    bins. Note that adversary still uses at most $2k$ servers, with cost at most
        $2k-1 + \mu$. 
    Hence, as in the deterministic proof, there exists at least one adversary's server $Q^*$, such that jobs in this server appear in at least
        $\Omega(d/k) / 2k = \Omega(d/k^2)$
    algorithm's bins. Since the adversary randomly picks a server $Q$ and sets jobs in it to be of duration $\mu$, we conclude that in expectation the algorithm has cost at least
        $\ge \frac{1}{2k} \cdot \Omega(d/k^2) \cdot \mu
        = \Omega(d\mu /k^3)$.
    Hence, the expected competitive ratio is at least
        $\frac{\Omega(d\mu /k^3)}{2k-1 + \mu} \ge \Omega(d \mu/k^4)$,
    where
        $k = \Theta(\log d)$,
    as claimed.
    For the clairvoyant case, the lower bound follows by setting all jobs to be of duration $1$.
\end{proof}

\section{Experiments}   \label{sec:Experiment}

% In this section, we investigate the empirical performance of different algorithms for the \RSiC problem on randomly generated inputs. 

 In this section, we provide a thorough evaluation of the average-case performance of almost all existing non-clairvoyant as well as clairvoyant algorithms for the \RSiC problem. 
 %Our study stands out as the most comprehensive one in this field. 
 \subsection{Experimental setup} 

 We evaluate the performance of different algorithms using randomly generated input sequences for $d$-dimensional \RSiC, for $d \in \{1, 2, 5\}$, closely adhering to the experimental setup detailed in \cite{RSiC2015MTF} for the $1$-dimensional case. In the experiments, we assume that each server has size $E^d$ where $E = 1000$, and each job is assumed to have a size in $\{1,2, \cdots, E\}^d$.
 For a given integral span value $T$; for $T \in \{1000, 5000 ,10000\}$, we assume that each job arrives at an integral time step within the interval $[0, T-\mu]$ and has an integral duration in $[1, \mu]$, for $\mu \in \{1,2,5,10,100\}$. Each experimental instance comprises a sequence of $n = 10000$ jobs, with the size and duration of each job selected randomly from their respective ranges, assuming a uniform distribution. The reported upper bound on competitve ratio of each studied algorithm is computed as the ratio of the average cost of the algorithm over 100 input sequences, and the average of the lower bound on $\OPT$ given by Lemma~\ref{lemma:opt_LB} for these instances. 

%Our experiments follow the same set up in \cite{RSiC2015MTF}. 
All our experiments were executed on a personal laptop with a Dual-core 2.3 GHz Intel Core i5 CPU. The laptop had 8 GB of RAM. The laptop was running Mac OS version 12.6.4. The code was written in C++ using VS code version 1.38.1. 

\subsection{Implemented algorithms}
We  implemented both clairvoyant and non-clairvoyant algorithms. The non-clairvoyant algorithms  we implemented are:

\begin{itemize}
    \item \textbf{\NF}:  which keeps only one open server at each time.

    \item \textbf{\textit{Modified}\NF}: which  assigns jobs with sizes greater than a specific threshold separately from the other jobs using the \NF algorithm.

    \item \textbf{\FF}: monotone \AF that orders servers in increasing order of opening time. 
 \item \textbf{\LF}: monotone \AF that orders servers in decreasing order of opening time. 

    \item \textbf{\textit{Modified}\FF}: which assigns jobs with sizes greater than a specific threshold separately from the other jobs using the \FF algorithm.

    \item \textbf{\BF}: \AF  that orders servers in increasing order of 
remaining capacity.

    \item \textbf{\WF}: \AF that orders servers in decreasing order of 
remaining capacity.   

\item \textbf{\RF}: \AF algorithm that orders servers randomly. 

    \item \textbf{\MTF}: monotone \AF  that orders servers in decreasing order of the last time a job was assigned to it. 
 
\end{itemize}

In our experiments, similar to~\cite{RSiC2015MTF}, we adopt the parameters for \textit{Modified}\NF and \textit{Modified}\FF as $E^d/(\mu +1)$ and $E^d/(\mu +7)$, respectively. This choice of values is designed to optimize the competitive ratio of these algorithms, as indicated in \cite{RSiC2015MTF, li2015dynamic}.

The clairvoyant algorithms we implemented are:

\begin{itemize}
    \item \textbf{\textit{Departure Strategy}}~\cite{ren2016clairvoyant}: the span is split into intervals of length $\tau$ each, where $\tau >0$ is a constant. Classifies jobs into categories according to their departure times. Each category contains all  jobs that depart in a time interval of length $\tau$.

    \item \textbf{\textit{Duration Strategy}}~\cite{ren2016clairvoyant}: classifies the jobs into categories such that the max/min job duration ratio for each category is a given constant $\alpha$. Given a base job duration $b$, each category includes all the jobs with durations between $b \alpha ^{i-1}$ and $b \alpha ^{i}$ for an integer $i$.

    \item \textbf{\textit{Hybrid Algorithm} (\HA)}~\cite{Clairvoyant_HA}:  classifies jobs according to their length and their arrivals. Suppose the maximum duration of jobs in the input sequence is $\mu$. Then all the jobs whose lengths are in range $[2^{i-1}, 2^{i}]$ for integer $  1\le i \le \left\lceil \log \mu \right\rceil +1$ and whose arrival times are in time interval $[(c-1) 2^{i}, c2^{i})$ for an integer $c$ are put into the same category. 
  
    \item \textbf{\Greedy}: monotone \AF as defined in Section~\ref{sec:greedy}.

    \item \textbf{New Hybrid}: $\HA^{\oplus d}$ as defined in Section \ref{sec:HA}.

\end{itemize}

\begin{table*}[ht]
\centering
\mysize
\resizebox{\textwidth}{!}{%
\begin{tabular}{|c|c|c|c|c|c|c|c|c|c|c|c|c|c|c|c|}

\hline
\multicolumn{1}{|c|}{} & \multicolumn{5}{|c|}{\textbf{T=1000}} & \multicolumn{5}{|c|}{\textbf{T=5000}} & \multicolumn{5}{|c|}{\textbf{T=10000}} \\
\hline
& $\mu=1$ & $\mu=2$ & $\mu=5$ & $\mu=10$ & $\mu=100$ & $\mu=1$ & $\mu=2$ & $\mu=5$ & $\mu=10$ & $\mu=100$ & $\mu=1$ & $\mu=2$ & $\mu=5$ & $\mu=10$ & $\mu=100$ \\
\hline
\multicolumn{1}{|c|}{} & \multicolumn{15}{|c|}{\textbf{Non-clairvoyant}} \\
\hline
\NF & $1.27$ & $1.37$ & $1.45$ & $1.49$ & $1.52$ & $1.12$ & $1.20$ & $1.32$ & $1.40$ & $ 1.51$ & $1.06$ & $1.10$ & $1.20$ & $1.31$ & $1.50$  \\
\hline
\MNF & $1.31$ & $1.39$ & $1.43$ & $1.48$ & $1.52$ & $1.19$ & $1.29$ & $1.41$ & $1.47$ & $1.52$ & $1.11$ & $1.19$ & $1.31$ & $1.39$ & $1.51$ \\
\hline
\WF & $1.41$ & $1.39$ & $1.36$ & $1.33$ & $1.29$ & $1.16$ & $1.20$ & $1.26$ & $1.28$ & $1.29$ & $1.06$ & $1.09$ & $1.16$ & $1.22$ & $1.29$  \\
\hline
\FF & $1.42$ & $1.36$ & $1.30$ & $1.27$ & $1.22$ & $1.17$ & $1.20$ & $1.24$ & $1.25$ & $1.23$ & $1.07$ & $1.10$ & $1.16$ & $1.21$ & $1.24$  \\
\hline
\MFF & $1.51$ & $1.44$ & $1.35$ & $1.30$ & $1.23$ &$1.25$& $1.29$ & $1.33$ & $1.32$& $1.25$ & $1.13$ & $1.17$ & $1.24$ & $1.28$ & $1.25$  \\
\hline
\BF & $1.51$ & $1.41$ & $1.31$ & $1.24$ & \textbf{1.11} & $1.17$ &$1.21$ & $1.25$ & $1.26$ & \textbf{1.16} & $1.07$ & $1.10$ & $1.17$ & $1.22$ & $1.19$ \\
\hline
\LF & $1.35$ & $1.34$ & $1.29$ & $1.25$ & $1.17$ & $1.14$ & $1.18$ &$1.23$ & $1.24$& $1.19$ & $1.05$ & $1.08$ & $1.15$ & $1.20$ & $1.21$ \\
\hline
\textit{Random Fit} & $1.49$ & $1.41$ & $1.34$ & $1.28$ & $1.18$ & $1.17$& $1.21$ & $1.26$& $1.27$ & $1.21$ & $1.07$ & $1.10$ & $1.17$ & $1.22$ & $1.23$ \\
\hline
\MTF & $1.32$ & $1.32$ & $1.28$ & $1.24$ & $1.16$ & $1.13$ & $1.17$ & $1.22$& $1.24$ & $1.19$ & $1.05$ & $1.08$ & $1.15$ & $1.20$ & $1.20$ \\
\hline
\multicolumn{1}{|c|}{} & \multicolumn{15}{|c|}{\textbf{Clairvoyant}} \\
\hline
\textit{Departure Strategy} & $1.42$ & $1.36$ & $1.30$ & $1.27$ & $1.17$ & $1.17$ & $1.20$ &$1.24$ & $1.25$ & $1.21$ & $1.07$ & $1.10$ & $1.16$ & $1.21$ & $1.23$ \\
\hline
\textit{Duration Strategy} & $1.42$ & $1.40$ & $1.35$ & $1.31$ & $1.23$ & $1.17$ & $1.20$ & $1.24$ & $1.25$ & $1.24$ & $1.07$ & $1.16$ & $1.26$ & $1.33$ & $1.29$  \\
\hline
\textit{Hybrid Algorithm} & \textbf{1.12} & \textbf{1.25} & $1.32$ & $1.33$ & $1.25$ & \textbf{1.03} &  $1.22$& $1.36$ & $1.39$& $ 1.31$& \textbf{1.01} & $1.15$ & $1.30$ & $1.39$ & $1.34$  \\
\hline
\textit{New Hybrid} & \textbf{1.12} & \textbf{1.25} &$1.32$ &$1.33$ & $1.25$ &\textbf{1.03} & $1.22$ & $1.36$ & $1.40$ & $1.31$ & \textbf{1.01} & $1.15$ & $1.30$ & $1.39$ & $1.34$ \\
\hline
\Greedy & $1.28$ & $1.27$ & \textbf{1.22} & \textbf{1.19} & $1.13$ & $1.12$ &\textbf{1.15} & \textbf{1.19} & \textbf{1.20} & \textbf{1.16} & $1.05$ & \textbf{1.07} & \textbf{1.13} & \textbf{1.17} & \textbf{1.17}  \\
\hline
\end{tabular}
}
\caption{Experimental results for the \RSiC problem when $d=1$.}
\label{table: experiment_d=1}
\end{table*}
%%%%%%%%%%%%%%%%%%%%%%%%%%%%%%%%%%%%%%

%%%%%%%%%%%%%%%%%%%%%%%%%%%%%%%%%%%%%%%%
\begin{table*}[ht]
\centering
\mysize
\resizebox{\textwidth}{!}{%
\begin{tabular}{|c|c|c|c|c|c|c|c|c|c|c|c|c|c|c|c|}
\hline
\multicolumn{1}{|c|}{} & \multicolumn{5}{|c|}{\textbf{T=1000}} & \multicolumn{5}{|c|}{\textbf{T=5000}} & \multicolumn{5}{|c|}{\textbf{T=10000}} \\
\hline
& $\mu=1$ & $\mu=2$ & $\mu=5$ & $\mu=10$ & $\mu=100$ & $\mu=1$ & $\mu=2$ & $\mu=5$ & $\mu=10$ & $\mu=100$ & $\mu=1$ & $\mu=2$ & $\mu=5$ & $\mu=10$ & $\mu=100$ \\
\hline
\multicolumn{1}{|c|}{} & \multicolumn{15}{|c|}{\textbf{Non-clairvoyant}} \\
\hline
\NF & $1.40$ & $1.49$ & $1.59$ & $1.65$ & $1.73$ & $1.12$ & $1.20$ & $1.36$& $1.48$& $1.69$ & $1.05$ & $1.09$ & $1.21$ & $1.35$ & $1.65$ \\
\hline
\MNF & $1.44$ & $1.52$ & $1.61$ & $1.65$ & $1.73$  & $1.17$ & $1.25$ & $1.39$ & $1.49$& $1.69$ & $1.09$ & $1.13$ & $1.23$ & $1.36$ & $1.65$ \\
\hline
\WF & $1.46$ & $1.45$ & $1.44$ & $1.42$ & $1.38$  & $1.14$& $1.19$ & $1.29$ & $1.35$ & $1.39$ & $1.05$ & $1.08$ & $1.17$ & $1.26$ & $1.38$ \\
\hline
\FF & $1.49$ & $1.45$ & $1.42$ & $1.40$ & $1.35$ & $1.15$ &$ 1.20$ & $1.29$ & $1.34$ & $1.37$ &  $1.06$ & $1.09$ & $1.17$ & $1.26$ & $1.37$  \\
\hline
\MFF & $1.50$ & $1.47$ & $1.43$ & $1.41$ & $1.35$ & $1.16$ & $1.21$ & $1.30$ & $1.35$ & $1.37$ & $1.06$ & $1.09$ & $1.18$ & $1.26$ & $1.37$ \\
\hline
\BF & $1.48$ & $1.44$ & $1.40$ & $1.37$ & $1.26$ &$1.14$ & $1.19$ & $1.28$ & $1.33$& $1.31$ & $1.05$ & $1.08$ & $1.17$ & $1.25$ & $1.33$  \\
\hline
\LF & $1.39$ & $1.40$ & $1.39$ & $1.36$ & $1.29$ & $1.12$ & $1.17$ & $1.27$ & $1.32$& $1.32$ &  $1.05$ & $1.08$ & $1.16$ & $1.24$ & $1.33$  \\
\hline
\textit{Random Fit} & $1.48$ & $1.45$ & $1.42$ & $1.39$ & $1.30$  & $1.14$& $1.19$ & $1.28$ & $1.34$ & $1.34$ & $1.05$ & $1.08$ & $1.17$ & $1.26$ & $1.35$  \\
\hline
\MTF &$1.38$ & $1.39$ & $1.38$ & $1.36$ & $1.28$ & $1.12$ & $1.17$ & $1.27$ & $1.32$ & $1.32$ &  $1.05$ & $1.07$ & $1.16$ & $1.24$ & $1.33$  \\
\hline
\multicolumn{1}{|c|}{} & \multicolumn{15}{|c|}{\textbf{Clairvoyant}} \\
\hline
\textit{Departure Strategy} & $1.48$ & $1.45$ & $1.42$ & $1.40$ & $ 1.30$  & $1.15$ & $1.20$ & $1.29$ & $1.34$ & $1.35$ & $1.06$ & $1.09$ & $ 1.17$ & $1.26$ & $1.37$ \\
\hline
\textit{Duration Strategy} & $1.48$ & $1.49$ & $1.48$ & $1.47$ & $1.38$  & $1.15$ & $1.20$ & $1.29$ & $1.34$ & $1.37$ & $ 1.06$ & $1.12$ & $1.23$ & $1.34$ & $ 1.44$ \\
\hline
\textit{Hybrid Algorithm} & \textbf{1.23} & $1.37$ & $1.45$ & $1.47$ & $1.38$ & \textbf{1.07} & $1.21$ & $1.36$ & $1.45$ & $1.45$ & \textbf{1.02} & $1.11$ & $1.25$ & $1.37$ & $1.48$ \\
\hline
\textit{New Hybrid} & $1.42$ & $1.54$ &$1.62$ &$1.65$ & $1.64$ & $1.17$ & $1.29$ & $1.46$ & $1.57$ & $1.65$ &  $1.09$ & $1.16$ & $1.31$ & $1.46$ & $1.65$ \\
\hline
\Greedy & $1.36$ & \textbf{1.36} & \textbf{1.34} & \textbf{1.32} & \textbf{1.24} & $1.12$ & \textbf{1.16} & \textbf{1.25} & \textbf{1.30} & \textbf{1.29} & $1.04$ & \textbf{1.07} & \textbf{1.15} & \textbf{1.23} & \textbf{1.30} \\
\hline
\end{tabular}
}
\caption{Experimental results for the \RSiC problem when $d=2$.}
\label{table: experiment_d=2}
\end{table*}
%%%%%%%%%%%%%%%%%%%%%%%%%%%%%%%%%%%%%%
%%%%%%%%%%%%%%%%%%%%%%%%%%%%%%%%%%%%%%%%
\begin{table*}[ht]
\centering
\mysize
\resizebox{\textwidth}{!}{%
\begin{tabular}{|c|c|c|c|c|c|c|c|c|c|c|c|c|c|c|c|}
\hline
\multicolumn{1}{|c|}{} & \multicolumn{5}{|c|}{\textbf{T=1000}} & \multicolumn{5}{|c|}{\textbf{T=5000}} & \multicolumn{5}{|c|}{\textbf{T=10000}} \\
\hline
& $\mu=1$ & $\mu=2$ & $\mu=5$ & $\mu=10$ & $\mu=100$ & $\mu=1$ & $\mu=2$ & $\mu=5$ & $\mu=10$ & $\mu=100$ & $\mu=1$ & $\mu=2$ & $\mu=5$ & $\mu=10$ & $\mu=100$ \\
\hline
\multicolumn{1}{|c|}{} & \multicolumn{15}{|c|}{\textbf{Non-clairvoyant}} \\
\hline
\NF & $1.49$ & $1.58$ & $1.70$ & $1.77$ & $1.90$& $1.11$ & $1.19$ & $1.37$ & $1.52$ & $1.82$ & \textbf{1.04} & $1.08$ & $1.19$ & $1.35$ & $1.76$\\
\hline
\MNF & $1.50$ & $1.58$ & $1.70$ & $1.77$ & $1.90$ & $1.12$ & $1.19$ & $1.37$ & $1.52$ & $1.82$ & \textbf{1.04} & $1.08$ & $1.19$ & $1.35$ & $1.76$\\
\hline
\WF & $1.47$ & $1.52$ & $1.59$ & $1.61$ & $1.61$ & $1.11$ & \textbf{1.18} & \textbf{1.33} & $1.45$ & $1.61$ &  \textbf{1.04} & \textbf{1.07} & $1.18$ & \textbf{1.31} & $1.60$\\
\hline
\FF & $1.48$ & $1.53$ & $1.59$ & $1.62$ & $1.62$ & $1.11$ & \textbf{1.18} & $1.34$ & $1.46$ & $1.62$ & \textbf{1.04} & \textbf{1.07} & $1.18$ & \textbf{1.31} & $1.60$  \\
\hline
\MFF & $1.48$ & $1.53$ & $1.59$ & $1.62$ & $1.62$ & $1.11$ & \textbf{1.18} & $1.34$ & $1.46$ & $1.62$ & \textbf{1.04} & \textbf{1.07} & $1.18$ & \textbf{1.31} & $1.60$ \\
\hline
\BF & $1.47$ & $1.52$ & $1.58$ & $1.60$ & $1.57$ & $1.11$ & \textbf{1.18} & \textbf{1.33} & $1.45$ & $1.60$ &  \textbf{1.04} & \textbf{1.07} & $1.18$ & \textbf{1.31} & $1.59$ \\
\hline
\LF & $1.45$ & $1.51$ & $1.57$ & $1.60$ & $1.57$ & $1.11$ & \textbf{1.18} & \textbf{1.33} & $1.45$ &  $1.60$ &  \textbf{1.04} & \textbf{1.07} & $1.18$ & \textbf{1.31} & \textbf{1.58} \\
\hline
\textit{Random Fit} & $1.47$ & $1.52$ & $1.58$ & $1.61$ & $1.59$ & $1.11$ & \textbf{1.18} & \textbf{1.33} & $1.45$ & $1.61$ & \textbf{1.04} & \textbf{1.07} & $1.18$ & \textbf{1.31} & $1.59$ \\
\hline
\MTF &$1.45$ & $1.51$ & $1.57$ & $1.60$ & $1.57$ & $1.11$ & \textbf{1.18} &  \textbf{1.33} & $1.45$ & $1.60$ & \textbf{1.04} & \textbf{1.07} & $1.18$ & \textbf{1.31} & $1.59$\\
\hline

\multicolumn{1}{|c|}{} & \multicolumn{15}{|c|}{\textbf{Clairvoyant}} \\
\hline
\textit{Departure Strategy} & $1.48$ & $1.53$ & $1.59$ & $1.62$ & $1.61$ & $1.11$ & \textbf{1.18} & $1.34$ & $1.46$ & $1.63$ & \textbf{1.04} & \textbf{1.07} & $1.18$ & \textbf{1.31} & $ 1.61$ \\
\hline
\textit{Duration Strategy} & $1.48$ & $1.55$ & $ 1.63$ & $1.67$ & $1.66$ & $1.11$& \textbf{1.18} & $1.34$ & $1.46$ & $1.63$&  \textbf{1.04} & $1.08$ & $1.19$ & $ 1.34$ & $1.65$  \\
\hline
\textit{Hybrid Algorithm} & \textbf{1.42} & $1.53$ & $1.63$ & $1.68$ & $ 1.68$ & \textbf{1.10} & $1.19$ & $1.37$ & $1.51$ & $1.70$ & \textbf{1.04} & $1.08$ & $1.20$ & $1.35$ & $1.68$ \\
\hline
\textit{New Hybrid} & $1.52$ & $1.61$ &$1.72$ &$1.79$ & $1.91$ & $1.13$ & $1.21$ & $1.39$ & $1.55$ & $1.85$ &  $1.05$ & $1.09$ & $1.21$ & $1.37$ & $ 1.79$ \\
\hline
\Greedy & $1.45$ & \textbf{1.50} & \textbf{1.56} & \textbf{1.59} & \textbf{1.55}& $1.11$ & \textbf{1.18} & \textbf{1.33} & \textbf{1.44} & \textbf{1.59} & \textbf{1.04} & \textbf{1.07} & \textbf{1.17} & \textbf{1.31} & \textbf{1.58} \\
\hline
\end{tabular}
}
\caption{Experimental results for the \RSiC problem when $d=5$.}
\label{table: experiment_d=5}
\end{table*}

% Our experimental results clearly show that \MTF has the best average-case performance when compared to all other non-clairvoyant algorithms as reported in ~\cite{RSiC2015MTF, VectorRSiC2023}. In terms of effectiveness, \LF performs similarly to \MTF, followed by \RF and \FF. On the other hand, \NF, \textit{Modified}\FF, and \textit{Modified}\NF show the worst average-case performance. Among all clairvoyant algorithms, for the small $\mu$s the \textit{Hybrid Algorithm} is the best algorithm while for larger values of $\mu$ the \Greedy algorithm has the best average-case performance. The \textit{Duration Strategy}, on the other hand, shows the worst average-case performance for the small $\mu$s and the \textit{New Hybrid} algorithm has the worst performance for the larger $\mu$s.

\subsection{Experimental results}
Our experimental results for $d \in \{1,2,5\}$ are shown in Tables~\ref{table: experiment_d=1},~\ref{table: experiment_d=2} and ~\ref{table: experiment_d=5}, respectively. We validated our results against those in \cite{RSiC2014} for $d=1$ and for the algorithms implemented there. Our results are slightly different as we use a better lower bound to compute the competitive ratio; when using the same lower bound as \cite{RSiC2014}, our results match exactly.

First we note that for all algorithms, the experimentally derived competitive ratio on random inputs is much better than the worst-case bounds derived theoretically. This is not surprising as the worst-case inputs are carefully constructed to beat the given algorithm and are unlikely to occur in practice. Comparing the three tables, we see that the competitive ratio for every algorithm and every value of $T$ and $\mu$ increases with increasing $d$. In general, the competitive ratio also increases with $\mu$, keeping other parameters constant. 

We note that many algorithms have versions that separate servers into separate categories, and assign jobs to servers in a particular category based on their sizes. In general, such modifications of algorithms do not perform better than the original versions on random inputs, even though they have better worst-case competitive ratios. For example, \MFF performs worse than \FF, \MNF performs worse than \NF, and New Hybrid performs worse than \HA in our experiments. We can also see that all clairvoyant algorithms except \Greedy classify jobs and servers into different categories, and do not have good performance. 

An interesting finding is that \Greedy has the best performance in almost all cases, among all clairvoyant and non-clairvoyant algorithms. Clearly, \Greedy being a clairvoyant algorithm uses the information on finishing time of jobs to its advantage. However, the other clairvoyant algorithms generally do not exhibit good performance, with the exception of \HA for the case $\mu=1$. It is important to note that \Greedy is the only monotone \AF algorithm among the clairvoyant algorithms we have implemented. Note that, the other clairvoyant algorithms do not belong to the monotone \AF algorithms category.

Among non-clairvoyant algorithms, the best algorithms are generally \MTF and \LF, which are also both monotone \AF algorithms. Surprisingly, 
% \yaqiao{this phenomenon was presented in Kamali, but in 2023Murhekar there is no such phenomenon, in fact, there it shows the gap between BestFit and MTF becomes larger as $\mu$ increases, which seems to make more sense, isn't it? because intuitively MTF beats BestFit by better alignment, whose effect should become clearer as $\mu$ increases. I think we should be careful here on what message to show to the reader.} \lata{I have edited slightly to address the comment above and the one in the introduction about experimental results.} 
as in \cite{RSiC2015MTF}, in our experiments \BF is one of the better algorithms, especially for higher values of $\mu$, where its performance ratio equals or betters that of \MTF and \LF \footnote{While \BF also has excellent performance in the experiments of \cite{murhekar2023brief}, it does not beat \MTF for any value of $d$ or $\mu$. However, their experimental setup is somewhat different to ours and that in \cite{RSiC2015MTF}.}.  Recall that the worst-case competitive ratio of \BF is unbounded as shown in \cite{RSiC2014}.

 In~\cite{RSiC2015MTF} and \cite{murhekar2023brief}, two key factors, namely {\em alignment} and {\em packing} are identified as contributing to the performance of an algorithm for \RSiC. The first factor is about how effectively jobs are aligned into servers in terms of their durations, while the second factor evaluates how tightly the jobs are packed together in servers. \AF algorithms (except \WF) do well in terms of packing, but do not consider alignment. \NF tries to align jobs but does not do as well with packing. The authors of \cite{RSiC2015MTF} stipulate that by assigning the next job to the server that has the {\em most recently arrived job}, \MTF succeeds in terms of aligning jobs well in terms of time, while it also succeeds in packing since it is an \AF algorithm.

\Greedy is also an \AF algorithm and therefore does well in terms of packing. By assigning to the server which contains the job that will {\em finish the last}, we may say that \Greedy aligns even better than \MTF. 

A final interesting finding is that the difference in performance between the algorithms appears to narrow for $d=5$. Further research is needed to understand this phenomenon, but one reason could be that because the sizes of jobs in different dimensions are chosen independently, it is harder for any algorithm to achieve a good packing, which diminishes the difference between the algorithms.

\section{Conclusion and open problems}\label{sec:conclusion}

In this paper, we studied the $d$-dimensional \RSiC problem. We introduced a sub-family of \AF algorithms called monotone algorithms, and showed that monotone \AF algorithms achieve competitive ratio $\Theta(d \mu)$. We also introduced a natural clairvoyant algorithm \Greedy that is a monotone \AF algorithm. We showed how to lift $1$-dimensional algorithms for \RSiC to work in $d$ dimensions via a general direct-sum theorem.  We also established a $\tilde{\Omega}(d \mu)$ lower bound for the non-clairvoyant setting under the assumption that $\mu \le \log d - 2$. Finally, we conducted experiments that demonstrate that the newly introduced \Greedy algorithm is among the best performing algorithms (both clairvoyant and non-clairvoyant) in the average-case scenario. 
% There are many directions for future work. The one we would be most interested in is to close the gap between upper and lower bounds in Corollary~\ref{cor:greedy}. In particular, we conjecture that $\rho(Greedy) \le d \mu + O(d)$.

The following is a list of open problems that are suggested by this work:

\begin{itemize}
    \item Close the gap between upper and lower bounds in Corollary~\ref{cor:greedy}. In particular, we conjecture that $\rho(Greedy) \le d \mu + O(d)$.

    \item Does there exist a randomized algorithm for $1$-dimensional clairvoyant \RSiC with competitive ratio $O(1)$?

    \item Can one remove the constraint $\mu \le \log d - 2$ in Theorems~\ref{thm:LB_dimd} and \ref{thm:LB_randomized_dimd_non-clairvoyant}?

    \item For $d$-dimensional clairvoyant \RSiC, is $\Omega(d\sqrt{\log \mu})$ a lower bound for \emph{any} algorithm?

    \item Do the results from the experimental section carry over to real-life industrial instances?
   % \item $\Omega(d\mu)$ lower bound for \emph{any} deterministic algorithm for $d$-dimensional non-clairvoyant \RSiC? We have $\Omega(d\mu/\log^3 d)$.
    
   % \item is there a randomized algorithm achieve $O(1)$ competitive ratio for $1$-dimensional clairvoyant \RSiC? There is nothing.

   % \item $\Omega(d\sqrt{\log \mu})$ for \emph{any} deterministic algorithm for $d$-dimensional clairvoyant \RSiC? We have a weak bound 
   %     $\Omega( \max\{\sqrt{\log \mu}, \frac{d}{ \log^2 d}\})$.
\end{itemize}

\newpage
\balance
~
\bibliography{main}{}

\begin{thebibliography}{10}

\bibitem{azar2013tightVBP}
Y.~Azar, I.~R. Cohen, S.~Kamara, and B.~Shepherd.
\newblock Tight bounds for online vector bin packing.
\newblock In {\em Proceedings of the forty-fifth annual ACM Symposium on Theory of Computing (STOC)}, pages 961--970, 2013.

\bibitem{Clairvoyant_HA}
Y.~Azar and D.~Vainstein.
\newblock Tight bounds for clairvoyant dynamic bin packing.
\newblock {\em ACM Transactions on Parallel Computing (TOPC)}, 6(3):1--21, 2019.

\bibitem{balogh2017new}
J.~Balogh, J.~B{\'e}k{\'e}si, G.~D{\'o}sa, L.~Epstein, and A.~Levin.
\newblock A new and improved algorithm for online bin packing.
\newblock {\em arXiv preprint arXiv:1707.01728}, 2017.

\bibitem{balogh2021new}
J.~Balogh, J.~B{\'e}k{\'e}si, G.~D{\'o}sa, L.~Epstein, and A.~Levin.
\newblock A new lower bound for classic online bin packing.
\newblock {\em Algorithmica}, 83:2047--2062, 2021.

\bibitem{chan2008dynamic}
J.~W.-T. Chan, T.-W. Lam, and P.~W. Wong.
\newblock Dynamic bin packing of unit fractions items.
\newblock {\em Theoretical Computer Science}, 409(3):521--529, 2008.

\bibitem{coffman2013bin}
E.~G. Coffman, J.~Csirik, G.~Galambos, S.~Martello, D.~Vigo, et~al.
\newblock Bin packing approximation algorithms: Survey and classification.
\newblock In {\em Handbook of Combinatorial Optimization}, pages 455--531. Springer, 2013.

\bibitem{coffman1984approximation}
E.~G. Coffman, M.~R. Garey, and D.~S. Johnson.
\newblock Approximation algorithms for bin-packing—an updated survey.
\newblock {\em Algorithm design for computer system design}, pages 49--106, 1984.

\bibitem{coffman1983dynamic}
E.~G. Coffman, Jr, M.~R. Garey, and D.~S. Johnson.
\newblock Dynamic bin packing.
\newblock {\em SIAM Journal on Computing}, 12(2):227--258, 1983.

\bibitem{garey1976resource}
M.~R. Garey, R.~L. Graham, D.~S. Johnson, and A.~C.-C. Yao.
\newblock Resource constrained scheduling as generalized bin packing.
\newblock {\em Journal of Combinatorial Theory, Series A}, 21(3):257--298, 1976.

\bibitem{hadary2020protean}
O.~Hadary, L.~Marshall, I.~Menache, A.~Pan, E.~E. Greeff, D.~Dion, S.~Dorminey, S.~Joshi, Y.~Chen, M.~Russinovich, et~al.
\newblock Protean:$\{$VM$\}$ allocation service at scale.
\newblock In {\em 14th USENIX Symposium on Operating Systems Design and Implementation (OSDI 20)}, pages 845--861, 2020.

\bibitem{HScoloring}
M.~M. Halld{\'o}rsson and M.~Szegedy.
\newblock Lower bounds for on-line graph coloring.
\newblock In {\em Proceedings of the third annual ACM-SIAM Symposium on Discrete Algorithms (SODA)}, pages 211--216. Society for Industrial and Applied Mathematics, 1992.

\bibitem{han2010dynamic}
X.~Han, C.~Peng, D.~Ye, D.~Zhang, and Y.~Lan.
\newblock Dynamic bin packing with unit fraction items revisited.
\newblock {\em Information Processing Letters}, 110(23):1049--1054, 2010.

\bibitem{jiang2012joint}
J.~W. Jiang, T.~Lan, S.~Ha, M.~Chen, and M.~Chiang.
\newblock Joint vm placement and routing for data center traffic engineering.
\newblock In {\em 2012 Proceedings IEEE INFOCOM}, pages 2876--2880. IEEE, 2012.

\bibitem{johnson1974worst}
D.~S. Johnson, A.~Demers, J.~D. Ullman, M.~R. Garey, and R.~L. Graham.
\newblock Worst-case performance bounds for simple one-dimensional packing algorithms.
\newblock {\em SIAM Journal on computing}, 3(4):299--325, 1974.

\bibitem{RSiC2015MTF}
S.~Kamali and A.~L{\'o}pez-Ortiz.
\newblock Efficient online strategies for renting servers in the cloud.
\newblock In {\em Proceedings of SOFSEM 2015: Theory and Practice of Computer Science: 41st International Conference on Current Trends in Theory and Practice of Computer Science}, pages 277--288. Springer, 2015.

\bibitem{karwayun2018dynamic}
R.~Karwayun.
\newblock A dynamic energy efficient resource allocation scheme for heterogeneous clouds using bin-packing heuristics.
\newblock {\em Algorithms}, 9:15--24, 2018.

\bibitem{LI202370VBP}
Y.~Li and D.~Pankratov.
\newblock Online vector bin packing and hypergraph coloring illuminated: Simpler proofs and new connections.
\newblock {\em arXiv preprint arXiv:2306.11241}, 2023.

\bibitem{RSiC2014}
Y.~Li, X.~Tang, and W.~Cai.
\newblock On dynamic bin packing for resource allocation in the cloud.
\newblock In {\em Proceedings of the 26th ACM Symposium on Parallelism in Algorithms and Architectures (SPAA)}, pages 2--11, 2014.

\bibitem{li2015dynamic}
Y.~Li, X.~Tang, and W.~Cai.
\newblock Dynamic bin packing for on-demand cloud resource allocation.
\newblock {\em IEEE Transactions on Parallel and Distributed Systems}, 27(1):157--170, 2015.

\bibitem{masoori2021renting}
M.~Masoori, L.~Narayanan, and D.~Pankratov.
\newblock Renting servers in the cloud: The case of equal duration jobs.
\newblock {\em arXiv preprint arXiv:2108.12486}, 2021.

\bibitem{masoori2021renting_conf}
M.~Masoori, L.~Narayanan, and D.~Pankratov.
\newblock Renting servers in the cloud: Parameterized analysis of {FirstFit}.
\newblock In {\em Proceedings of the 25th International Conference on Distributed Computing and Networking (ICDCN)}, page 199–208, 2024.

\bibitem{murhekar2023brief}
A.~Murhekar, D.~Arbour, T.~Mai, and A.~B. Rao.
\newblock Brief announcement: Dynamic vector bin packing for online resource allocation in the cloud.
\newblock In {\em Proceedings of the 35th ACM Symposium on Parallelism in Algorithms and Architectures (SPAA)}, pages 307--310, 2023.

\bibitem{ren2018combinatorial}
R.~Ren.
\newblock {\em Combinatorial algorithms for scheduling jobs to minimize server usage time}.
\newblock PhD thesis, 2018.

\bibitem{ren2016clairvoyant}
R.~Ren and X.~Tang.
\newblock Clairvoyant dynamic bin packing for job scheduling with minimum server usage time.
\newblock In {\em Proceedings of the 28th ACM Symposium on Parallelism in Algorithms and Architectures (SPAA)}, pages 227--237, 2016.

\bibitem{stolyar2013infinite}
A.~L. Stolyar.
\newblock An infinite server system with general packing constraints.
\newblock {\em Operations Research}, 61(5):1200--1217, 2013.

\bibitem{stolyar2013large}
A.~L. Stolyar and Y.~Zhong.
\newblock A large-scale service system with packing constraints: Minimizing the number of occupied servers.
\newblock {\em ACM SIGMETRICS Performance Evaluation Review}, 41(1):41--52, 2013.

\bibitem{wong20128}
P.~W. Wong, F.~C. Yung, and M.~Burcea.
\newblock An 8/3 lower bound for online dynamic bin packing.
\newblock In {\em International Symposium on Algorithms and Computation (ISAAC)}, pages 44--53. Springer, 2012.

\end{thebibliography}
\bibliographystyle{abbrv}

\end{document}